\newcolumntype{Y}{>{\centering\arraybackslash}X}
\newtheorem{The}{Theorem}[section]
\newtheorem{Pro}[The]{Proposition}
\newtheorem{Ex}[The]{Example}
\newtheorem{Def}[The]{Definition}
\newtheorem{Cor}[The]{Corollary}
\newtheorem{proof}{Proof}
\definecolor{darkred}{rgb}{.7, 0, 0}
\numberwithin{equation}{section}
\numberwithin{table}{section}
\numberwithin{figure}{section}
\journal{\textbf{arXiv}}
\begin{document}
\begin{frontmatter}
\title{Bayesian Bandwidths in Semiparametric Modelling for Nonnegative Orthant Data with Diagnostics }
\author[rvt]{C\'elestin C. Kokonendji}
\ead{celestin.kokonendji@univ-fcomte.fr}
\author[rvt1,rvt2]{Sobom M. Som\'e\corref{cor1}}
\ead{sobom.some@gmail.com}
\cortext[cor1]{\textit{Corresponding author:} Universit\'e Thomas SANKARA, UFR Sciences et Techniques, 12 BP 417 Ouagadougou 12, Ouagadougou, Burkina Faso.}

\address[rvt]{Universit\'e Bourgogne Franche-Comt\'e, Laboratoire de Math\'ematiques de Besan\c{c}on, UMR 6623 CNRS-UBFC, Besan\c{c}on, France}
\address[rvt1]{Universit\'e Joseph KI-ZERBO, LANIBIO, Ouagadougou, Burkina Faso}
\address[rvt2]{Universit\'e Thomas SANKARA, UFR ST, Ouagadougou, Burkina Faso}
\begin{abstract}

Multivariate nonnegative orthant data are real vectors bounded to the left by the null vector, and they can be continuous, discrete or mixed. We first review the recent relative variability indexes for multivariate nonnegative continuous and count distributions. As a prelude, the classification of two comparable distributions having the same mean vector is done through under-, equi- and over-variability with respect to the reference distribution. Multivariate associated kernel estimators are then reviewed with new proposals that can accommodate any nonnegative orthant dataset. We focus on bandwidth matrix selections by adaptive and local Bayesian methods for semicontinuous and counting supports, respectively. We finally introduce a flexible semiparametric approach for estimating all these  distributions on nonnegative supports. The corresponding estimator is directed by a given parametric part, and a nonparametric part which is a weight function to be estimated through multivariate associated kernels. A diagnostic model is also discussed to make an appropriate choice between the parametric, semiparametric and nonparametric approaches. The retention of pure nonparametric means the inconvenience of parametric part used in the modelization. Multivariate real data examples in semicontinuous setup as reliability are gradually considered to illustrate the proposed approach. Concluding remarks are made for extension to other multiple functions.
\end{abstract}

\begin{keyword}
 Associated kernel \sep Bayesian selector \sep dispersion index \sep model diagnostics \sep multivariate distribution \sep variation index \sep weighted distribution.\newline
 {\bf Mathematics Subject Classification 2020}: 62G07; 62G20; 62G99; 62H10; 62H12.
\end{keyword}
\end{frontmatter}

\section{Introduction}\label{1.Intro}

The $d$-variate nonnegative orthant data on $\mathbb{T}_d^+\subseteq [0,\infty)^d$ are real $d$-vectors bounded to the left by the null vector $\mathbf{0}_d$, and they can be continuous, discrete (e.g., count, categorial) or mixed. For simplicity, we here assume either $\mathbb{T}_d^+=[0,\infty)^d$ for semicontinuous or $\mathbb{T}_d^+=\mathbb{N}^d:=\{0,1,2,\ldots\}^d$ for counting; and, we then omit both setups of categorial and mixed which can be a mix of discrete and continuous data (e.g., \cite{SKI16}) or other {\em time scales} (see, e.g., \cite{Libengue13}). Modelling such datasets of $\mathbb{T}_d^+$ need nonnegative orthant distributions which are generally not easy to handle in practical data analysis. The baseline parametric distribution (e.g., \cite{Johnson97,Kotz00}) for the analysis of nonnegative countinuous data is the
exponential distribution (e.g., in Reliability) and that of count data is the Poisson one.
However, there intrinsic assumptions of the two first moments are often not realistic for
many applications. The nonparametric topic of associated kernels, which is adaptable to any
support $\mathbb{T}_d^+$ of probability density or mass function (pdmf), is widely studied in very recent years. We can refer to \cite{Belaid16,Belaid18,Funke15,Hirukawa18,KS18,Ouimet20,SK16,SK20,ZAK16,Zougab18} for general results and more specific developments on associated kernel orthant distributions using classical cross-validation and Bayesian methods to select bandwidth matrices. Thus, a natural question of a flexible semiparametric modelling now arises for all these multivariate orthant datasets.

Indeed, we first need a review of the recent relative variability indexes for multivariate semicontinuous (\cite{KTS20}) and count (\cite{KP18}) distributions. The infinite number and complexity of multivariate parametric distributions require the study of different indexes for comparisons and discriminations between them. Simple classifications of two comparable distributions are done through under-, equi- and over-variability with respect to the reference
distribution. We refer to \cite{Weiss19} and references therein for univariate categorial data which does not yet have its multivariate version.  We then survey multivariate associated kernels that can accommodate any nonnegative orthant dataset. Most useful families shall be pointed out, mainly as a product of univariate associated kernels and including properties and constructions. Also, we shall focus on bandwidth matrix selections by Bayesian methods. Finally, we have to introduce a flexible semiparametric approach for estimating multivariate nonnegative orthant distribution. Following 
\cite{HjortG95} for classical kernels, the corresponding estimator shall be directed by a given parametric part, and a nonparametric part which is a weight function to be estimated through multivariate associated kernels. But what is the meaning of a diagnostic model to make an appropriate choice between the parametric, semiparametric and nonparametric approaches in this multivariate framework? Such a discussion is to highlight practical improvements on standard nonparametric methods for multivariate semicontinuous datasets, through the use of a reasonable parametric-start description. See, for instance, \cite{KSKB09,KZSK17,SKZK16} for univariate count datasets.

In this paper, the main goal is to introduce a family of semiparametric estimators with multivariate associated kernels for both semicontinuous and count data. They are meant to be flexible compromises between a grueling parametric and fuzzy nonparametric approaches. The rest of the paper is organized as follow. Section \ref{2.Indexes} presents a brief review of the relative variability indexes for multivariate nonnegative orthant distributions, by distinguishing the dispersion for counting and the variation for semicontinuous. Section \ref{3.Kernels} displays a short panoply of multivariate associated kernels which are useful for semicontinuous and for count datasets. Properties are reviewed with new proposals, including both appropriated Bayesian methods of bandwidths selections. In Section \ref{4.Semiparam}, we introduce the semiparametric kernel estimators with $d$-variate parametric start. We also investigate the corresponding diagnostic model. Section \ref{5.Applic} is devoted to numerical illustrations, especially for uni- and  multivariate semicontinuous datasets. In Section \ref{6.Concl}, we make some final remarks in order to extend to other multiple functions, as regression. Eventually, appendixes are exhibited for technical proofs and illustrations.

\section{Relative Variability Indexes for Orthant Distributions}\label{2.Indexes}

Let $\boldsymbol{X} = (X_1,\ldots,X_d)^{\top}$ be a nonnegative orthant $d$-variate random vector on $\mathbb{T}_d^+\subseteq [0,\infty)^d$, $d\geq 1$. We use the following notations: $\sqrt{\mathrm{var}\boldsymbol{X}}=(\sqrt{\mathrm{var}X_1},\ldots,\sqrt{\mathrm{var}X_d})^{\top}$
is the elementwise square root of the variance vector of $\boldsymbol{X}$; 
$\mathrm{diag}\sqrt{\mathrm{var}\boldsymbol{X}}=\mathrm{diag}_d (\sqrt{\mathrm{var}X_j})$
is the $d\times d$ diagonal matrix with diagonal entries $\sqrt{\mathrm{var}X_j}$ and $0$ elsewhere; and, 
$\mathrm{cov}\boldsymbol{X}= (\mathrm{cov}(X_i,X_j) )_{i,j\in \{1,\ldots,d\}}$
denotes the covariance matrix of $\boldsymbol{X}$ which is a
$d\times d$ symmetric matrix with entries $\mathrm{cov}(X_i,X_j)$ such that
$\mathrm{cov}(X_i,X_i)=\mathrm{var}X_i$ is the variance of $X_i$. Then, one has
\begin{equation}\label{CovRhoVar}
\mathrm{cov}\boldsymbol{X}= (\mathrm{diag}\!\sqrt{\mathrm{var}\boldsymbol{X}})( \boldsymbol{{\textcolor{black}{\rho}}_{\boldsymbol{X}}})(\mathrm{diag}\!\sqrt{\mathrm{var}\boldsymbol{X}}),
\end{equation}
where
$\boldsymbol{\rho}_{\boldsymbol{X}}=\boldsymbol{\rho}(\boldsymbol{X})$
is the correlation matrix of $\boldsymbol{X}$; see, e.g., \cite[Eq. 2-36]{JohnsonW07}. It is noteworthy that there are huge many multivariate distributions with exponential (resp. Poisson)  margins. Therefore, we denote a generic $d$-variate exponential distribution by $\mathscr{E}_{d}(\boldsymbol{\mu},\boldsymbol{\rho})$, given specific positive mean vector $\boldsymbol{\mu}^{-1}:=(\mu_1^{-1},\ldots,\mu_d^{-1})^{\top}$ and correlation matrix $\boldsymbol{\rho}=(\rho_{ij})_{i,j \in \{1, \ldots, d\}}$. Similarly, a generic $d$-variate Poisson  distribution is given by $\mathscr{P}_{d}(\boldsymbol{\mu},\boldsymbol{\rho})$, with  positive mean vector $\boldsymbol{\mu}:=(\mu_1,\ldots,\mu_d)^{\top}$ and correlation matrix $\boldsymbol{\rho}$. See, e.g., Appendix \ref{AppendixA} for more extensive exponential and Poisson models with possible behaviours in the negative correlation setup. The uncorrelated or independent $d$-variate exponential and Poisson will be written as $\mathscr{E}_d(\boldsymbol{\mu})$ and $\mathscr{P}_d(\boldsymbol{\mu})$, respectively, for 
$\boldsymbol{\rho}=\boldsymbol{I}_d$ the $d\times d$ unit matrix. Their respective $d$-variate probability density function (pdf) and probability mass function (pmf) are the product of $d$ univariate ones. 

According to \cite{KTS20} and following the recent univariate unification of the well-known (Fisher) dispersion and the (J\o rgensen) variation indexes by Tour\'e {\it et al.} \cite{TDGK20}, the relative variability index of $d$-variate nonnegative orthant distributions can be written  as follows. 
Let $\boldsymbol{X}$ and $\boldsymbol{Y}$ be two random vectors on the same support $\mathbb{T}_d^+\subseteq [0,\infty)^d$ and assume $\boldsymbol{m}:=\mathbb{E}\boldsymbol{X}=\mathbb{E}\boldsymbol{Y}$, $\boldsymbol{\Sigma}_{\boldsymbol{X}}:=\mathrm{cov}\boldsymbol{X}$ and $\mathbf{V}_{F_{\boldsymbol{Y}}}(\boldsymbol{m}):=\mathrm{cov}(\boldsymbol{Y})$ fixed, then the
\textit{relative variability index} of $\boldsymbol{X}$ with respect to $\boldsymbol{Y}$ is defined as the positive quantity 
\begin{equation}\label{RWI}
\mathrm{RWI}_{\boldsymbol{Y}}(\boldsymbol{X}):=\mathrm{tr}[\boldsymbol{\Sigma}_{\boldsymbol{X}}\mathbf{W}^+_{F_{\boldsymbol{Y}}}(\boldsymbol{m})]\gtreqqless 1,
\end{equation}
where ``$\mathrm{tr}(\cdot)$'' stands for the trace operator and $\mathbf{W}^+_{F_{\boldsymbol{Y}}}(\boldsymbol{m})$ is the unique Moore-Penrose inverse of the associated matrix $\mathbf{W}_{F_{\boldsymbol{Y}}}(\boldsymbol{m}):=[\mathbf{V}_{F_{\boldsymbol{Y}}}(\boldsymbol{m})]^{1/2}[\mathbf{V}_{F_{\boldsymbol{Y}}}(\boldsymbol{m})]^{\top/2}$ to $\mathbf{V}_{F_{\boldsymbol{Y}}}(\boldsymbol{m})$. From (\ref{RWI}), $\mathrm{RWI}_{\boldsymbol{Y}}(\boldsymbol{X})\gtreqqless 1$ means the \textbf{over-} (\textbf{equi-} and \textbf{under-variability}) of $\boldsymbol{X}$ compared to $\boldsymbol{Y}$ is realized if $\mathrm{RWI}_{\boldsymbol{Y}}(\boldsymbol{X})>1$ ($\mathrm{RWI}_{\boldsymbol{Y}}(\boldsymbol{X})=1$  and $\mathrm{RWI}_{\boldsymbol{Y}}(\boldsymbol{X})<1$, respectively). 

The expression (\ref{RWI}) of RWI does not appear to be very easy to handle in this general formulation on $\mathbb{T}_d^+\subseteq [0,\infty)^d$, even the empirical version and interpretations. We now detail both multivariate cases of counting and of semicontinous. Their corresponding empirical versions are given in \cite{KP18,KTS20}.

\subsection{Relative Dispersion Indexes for Count Distributions}

For $\mathbb{T}_d^+=\mathbb{N}^d$, let $\mathbf{W}_{F_{\boldsymbol{Y}}}(\boldsymbol{m})=\sqrt{\boldsymbol{m}}\sqrt{\boldsymbol{m}}^\top$ be the $d\times d$ matrix of rank 1. Then, $\boldsymbol{\Sigma}_{\boldsymbol{X}}\mathbf{W}^+_{F_{\boldsymbol{Y}}}(\boldsymbol{m})$ of  (\ref{RWI}) is also of rank 1 and has only one positive eigenvalue, denoted by
\begin{equation}\label{GDI-def}
\mathrm{GDI}(\boldsymbol{X}) :=\frac{\sqrt{\mathbb{E}\boldsymbol{X}}^\top\, ( \mathrm{cov}\boldsymbol{X})\sqrt{\mathbb{E}\boldsymbol{X}}}{\mathbb{E}\boldsymbol{X}^{\top}\mathbb{E}\boldsymbol{X}}\gtreqqless 1
\end{equation} 
and called {\em generalized dispersion index} of $\boldsymbol{X}$ compared to $\mathbf{Y}\sim\mathscr{P}_d(\mathbb{E}\mathbf{X})$ with $\mathbb{E}\mathbf{Y}=\mathbb{E}\mathbf{X}=\boldsymbol{m}$ (\cite{KP18}). For $d=1$, $\mathrm{GDI}(X_1)=\mathrm{var}X_1/\mathbb{E}X_1=\mathrm{DI}(X_1)$ is the (Fisher) dispersion index with respect to the Poisson distribution. To derive this interpretation of GDI, we successively  decompose the denominator of (\ref{GDI-def}) as 
\begin{equation}\label{EspYdiag}
\mathbb{E}\boldsymbol{X}^{\top}\mathbb{E}\boldsymbol{X}=\sqrt{\mathbb{E}\boldsymbol{X}}^{\top}\, ( \mathrm{diag}\mathbb{E}\boldsymbol{X}) \sqrt{\mathbb{E}\boldsymbol{X}}
=[(\mathrm{diag}\sqrt{\mathbb{E}\boldsymbol{X}})\!\sqrt{\mathbb{E}\boldsymbol{X}}]^{\top}(\boldsymbol{I}_d)[ (\mathrm{diag}\sqrt{\mathbb{E}\boldsymbol{X}})\!\sqrt{\mathbb{E}\boldsymbol{X}}]
\end{equation}
and the numerator of (\ref{GDI-def}) by using also (\ref{CovRhoVar}) as
$$
\sqrt{\mathbb{E}\boldsymbol{X}}^\top\, ( \mathrm{cov}\boldsymbol{X})\sqrt{\mathbb{E}\boldsymbol{X}}=
[(\mathrm{diag}\!\sqrt{\mathrm{var}\boldsymbol{X}})\sqrt{\mathbb{E}\boldsymbol{X}}]^{\top} ( \boldsymbol{\rho}_{\boldsymbol{X}})\,[(\mathrm{diag}\!\sqrt{\mathrm{var}\boldsymbol{X}}) \sqrt{\mathbb{E}\boldsymbol{X}}].
$$
Thus, $\mathrm{GDI}(\boldsymbol{X})$ makes it possible to compare the full variability of $\boldsymbol{X}$ (in the numerator) with respect to its expected uncorrelated Poissonian variability (in the denominator) which depends only on $\mathbb{E}\boldsymbol{X}$. In other words, the count random vector $\mathbf{X}$ is \textbf{over-} (\textbf{equi-} and \textbf{under-dispersed}) with respect to $\mathscr{P}_d(\mathbb{E}\mathbf{X})$  if $\mathrm{GDI}(\boldsymbol{X})>1$ ($\mathrm{GDI}(\boldsymbol{X})=1$ and $\mathrm{GDI}(\boldsymbol{X})<1$, respectively). This is a generalization in multivariate framework of the well-known (univariate) Fisher dispersion index by  \cite{KP18}. See, e.g., \cite{Arnold20,KP18} for illustrative examples. Also, we can modify $\mathrm{GDI}(\boldsymbol{X})$ to $\mathrm{MDI}(\boldsymbol{X})$, as {\em marginal dispersion index}, by replacing $\mathrm{cov}\boldsymbol{X}$ in (\ref{GDI-def}) with $\mathrm{diag}\!\sqrt{\mathrm{var}\boldsymbol{X}}$ to obtain dispersion information only coming from the margins of $\boldsymbol{X}$.

More generally, for two count random vectors $\boldsymbol{X}$ and $\boldsymbol{Y}$ on the same support $\mathbb{T}_d^+\subseteq\mathbb{N}^d$ with $\mathbb{E}\boldsymbol{X}=\mathbb{E}\boldsymbol{Y}$ and $\mathrm{GDI}(\boldsymbol{Y})>0$, the {\em relative dispersion index} is defined by
\begin{equation}\label{RDI}
\mathrm{RDI}_{\boldsymbol{Y}}(\boldsymbol{X}): =\frac{\mathrm{GDI}(\boldsymbol{X})}{\mathrm{GDI}(\boldsymbol{Y})} = 
\frac{[(\mathrm{diag}\!\sqrt{\mathrm{var}\boldsymbol{X}})\sqrt{\mathbb{E}\boldsymbol{X}}]^{\top} ( \boldsymbol{\rho}_{\boldsymbol{X}})\,[(\mathrm{diag}\!\sqrt{\mathrm{var}\boldsymbol{X}}) \sqrt{\mathbb{E}\boldsymbol{X}}]}
{[(\mathrm{diag}\!\sqrt{\mathrm{var}\boldsymbol{Y}})\sqrt{\mathbb{E}\boldsymbol{Y}}]^{\top} ( \boldsymbol{\rho}_{\boldsymbol{Y}})\,[(\mathrm{diag}\!\sqrt{\mathrm{var}\boldsymbol{Y}}) \sqrt{\mathbb{E}\boldsymbol{Y}}]}
\gtreqqless 1;
\end{equation}
i.e., the {\bf over-} (\textbf{equi-} and \textbf{under-dispersion}) of $\boldsymbol{X}$ compared to $\boldsymbol{Y}$ is realized if $\mathrm{GDI}(\boldsymbol{X})>\mathrm{GDI}(\boldsymbol{Y})$ ($\mathrm{GDI}(\boldsymbol{X})=\mathrm{GDI}(\boldsymbol{Y})$ and $\mathrm{GDI}(\boldsymbol{X})<\mathrm{GDI}(\boldsymbol{Y})$, respectively). Obviously, GDI is a particular case of RDI with any general reference than $\mathscr{P}_d$. Consequently, many properties of GDI are easily extended to RDI.

\subsection{Relative Variation Indexes for Semicontinuous Distributions}

Assuming here $\mathbb{T}_d^+=[0,\infty)^d$ and $\mathbf{W}_{F_{\boldsymbol{Y}}}(\boldsymbol{m})=\boldsymbol{m}\boldsymbol{m}^\top$ another $d\times d$ matrix of rank 1. Then, we also have that $\boldsymbol{\Sigma}_{\boldsymbol{X}}\mathbf{W}^+_{F_{\boldsymbol{Y}}}(\boldsymbol{m})$ of  (\ref{RWI}) is of rank 1. Similar to (\ref{GDI-def}), the {\em generalized variation index} of $\boldsymbol{X}$ compared to $\mathscr{E}_d(\mathbb{E}\mathbf{X})$ is defined by
\begin{equation}\label{GVI-def}
\mathrm{GVI}(\boldsymbol{X}):=\frac{\mathbb{E}\boldsymbol{X}^\top\, ( \mathrm{cov}\boldsymbol{X})\;\mathbb{E}\boldsymbol{X}}{(\mathbb{E}\boldsymbol{X}^{\top}\mathbb{E}\boldsymbol{X})^2}\gtreqqless 1;
\end{equation}
i.e., $\mathbf{X}$ is \textbf{over-} (\textbf{equi-} and \textbf{under-varied}) with respect to $\mathscr{E}_d(\mathbb{E}\mathbf{X})$  if $\mathrm{GVI}(\boldsymbol{X})>1$ ($\mathrm{GVI}(\boldsymbol{X})=1$ and $\mathrm{GVI}(\boldsymbol{X})<1$, respectively); see  \cite{KTS20}.
Remark that when $d=1$, $\mathrm{GVI}(X_1)=\mathrm{var}X_1/(\mathbb{E}X_1)^2=\mathrm{VI}(X_1)$ is the univariate (J\o rgensen) variation index which is recently introduced by Abid {\it et al.} \cite{AKM20}. From (\ref{EspYdiag}) and using again (\ref{CovRhoVar}) for rewritting the numerator of (\ref{GVI-def}) as
$$
\mathbb{E}\boldsymbol{X}^\top\, ( \mathrm{cov}\boldsymbol{X})\;\mathbb{E}\boldsymbol{X}=
[(\mathrm{diag}\!\sqrt{\mathrm{var}\boldsymbol{X}})\mathbb{E}\boldsymbol{X}]^{\top} ( \boldsymbol{\rho}_{\boldsymbol{X}})\,[(\mathrm{diag}\!\sqrt{\mathrm{var}\boldsymbol{X}}) \mathbb{E}\boldsymbol{X}],
$$
$\mathrm{GVI}(\boldsymbol{X})$ of (\ref{GVI-def}) can be interpreted as the ratio of the full variability of $\boldsymbol{X}$ with respect to its expected uncorrelated exponential $\mathscr{E}_d(\mathbb{E}\mathbf{X})$ variability which depends only on $\mathbb{E}\boldsymbol{X}$. Similar to $\mathrm{MDI}(\boldsymbol{X})$, we can define $\mathrm{MVI}(\boldsymbol{X})$ from $\mathrm{GVI}(\boldsymbol{X})$. See \cite{KTS20} for properties, numerous examples and numerical illustrations.

The \textit{relative variation index} is defined, for two semicontinuous random vectors $\boldsymbol{X}$ and $\boldsymbol{Y}$ on the same support $\mathbb{T}_d^+=[0,\infty)^d$ with $\mathbb{E}\boldsymbol{X}=\mathbb{E}\boldsymbol{Y}$ and $\mathrm{GVI}(\boldsymbol{Y})>0$, by
\begin{equation}\label{RVI}
\mathrm{RVI}_{\boldsymbol{Y}}(\boldsymbol{X}): =\frac{\mathrm{GVI}(\boldsymbol{X})}{\mathrm{GVI}(\boldsymbol{Y})} = 
\frac{[(\mathrm{diag}\!\sqrt{\mathrm{var}\boldsymbol{X}})\mathbb{E}\boldsymbol{X}]^{\top} ( \boldsymbol{\rho}_{\boldsymbol{X}})\,[(\mathrm{diag}\!\sqrt{\mathrm{var}\boldsymbol{X}}) \mathbb{E}\boldsymbol{X}]}
{[(\mathrm{diag}\!\sqrt{\mathrm{var}\boldsymbol{Y}})\mathbb{E}\boldsymbol{Y}]^{\top} ( \boldsymbol{\rho}_{\boldsymbol{Y}})\,[(\mathrm{diag}\!\sqrt{\mathrm{var}\boldsymbol{Y}}) \mathbb{E}\boldsymbol{Y}]} 
\gtreqqless 1;
\end{equation}
i.e., the \textbf{over-} (\textbf{equi-} and \textbf{under-variation}) of $\boldsymbol{X}$ compared to $\boldsymbol{Y}$ is carried out if $\mathrm{GVI}(\boldsymbol{X})>\mathrm{GVI}(\boldsymbol{Y})$ ($\mathrm{GVI}(\boldsymbol{X})=\mathrm{GVI}(\boldsymbol{Y})$ and $\mathrm{GVI}(\boldsymbol{X})<\mathrm{GVI}(\boldsymbol{Y})$, respectively). Of course, RVI generalizes GVI for multivariate semicontinuous distributions. For instance, one refers to \cite{KTS20} for more details on its discriminating power in multivariate parametric models from two first moments.

\section{Multivariate Orthant Associated Kernels}\label{3.Kernels}

Nonparametric techniques through associated kernels represent an alternative approach for multivariate orthant data. Let $\textbf{X}_{1},\ldots,\textbf{X}_{n}$ be independent and identically distributed (iid) nonnegative orthant $d$-variate random vectors with an unknown joint pdmf $f$ on $\mathbb{T}_d^+\subseteq [0,\infty)^d$, for $d \geq 1$. Then the multivariate associated kernel estimator $\widetilde{f}_{n}$ of $f$ is expressed as
\begin{equation}\label{NPE}
\widetilde{f}_{n}(\mathbf{x})=\frac{1}{n} \sum_{i=1}^{n} \mathbf{K}_{\mathbf{x},\mathbf{H}}(\mathbf{X}_i),~~~\forall\mathbf{x}=(x_{1},\ldots, x_{d})^\top\in\mathbb{T}_d^+,
\end{equation}
where $\mathbf{H}$ is a given $d\times d$ bandwidth matrix (i.e., symmetric and positive definite) such that $\mathbf{H} \equiv \mathbf{H}_n \rightarrow \mathbf{0}_\mathbf{d}$ (the $d\times d$ null matrix) as $n\rightarrow\infty$, and $K_{\mathbf{x},\mathbf{H}}(\cdot)$ is a multivariate (orthant) associated kernel, parameterized by $\mathbf{x}$ and $\mathbf{H}$; see, e.g., \cite{KS18}. More precisely, we have the following refined definition.

\begin{Def}\label{def_MDAK}
	Let $\mathbb{T}_d^+$ be the support of the pdmf to be estimated, $\mathbf{x} \in \mathbb{T}_d^+$ a target vector and $\textbf{H}$ a bandwidth matrix. A parameterized pdmf $\mathbf{K}_{\mathbf{x},\mathbf{H}}(\cdot)$ on support $\mathbb{S}_{\mathbf{x},\mathbf{H}}\subseteq \mathbb{T}_d^+$ is called "multivariate orthant associated kernel" if the following conditions are satisfied:
	\begin{equation*}
	\mathbf{x}\in\mathbb{S}_{\mathbf{x},\mathbf{H}},\;
	\mathbb{E}\mathcal{Z}_{\mathbf{x},\mathbf{H}}=\mathbf{x}+\mathbf{A}(\mathbf{x},\mathbf{H})\to\mathbf{x} \;and\;
	\mathrm{cov}\mathcal{Z}_{\mathbf{x},\mathbf{H}}=\mathbf{B}(\mathbf{x},\mathbf{H})\to \mathbf{0}_{d}^{+},
	\end{equation*}
	where $\mathcal{Z}_{\mathbf{x},\mathbf{H}}$ denotes the corresponding orthant random vector with pdmf $\mathbf{K}_{\mathbf{x},\mathbf{H}}$ such that vector $\mathbf{A}(\mathbf{x},\mathbf{H})\to\mathbf{0}$ (the $d$-dimentional null vector) and positive definite matrix $\mathbf{B}(\mathbf{x},\mathbf{H})\to \mathbf{0}_{d}^{+}$ as $\mathbf{H}\to\mathbf{0}_d$ (the $d\times d$ null matrix), and $\mathbf{0}_{d}^{+}$ stands for a symmetric matrix with entries $u_{ij}$ for $i,j=1,\dots,d$ such that $u_{ij} \in [0,1)$.
\end{Def}

This definition exists in the univariate count case of \cite{KSK11,KZSK17} and encompasses the multivariate one by \cite{KS18}. The choice of the orthant associated kernel satisfying $\lim\limits_{\mathbf{H}\rightarrow \mathbf{0}_\mathbf{d}}\mathrm{Cov}(\mathcal{Z}_{\mathbf{x},\mathbf{H}})=\mathbf{0}_\mathbf{d}$ assures the convergence of its corresponding estimator named of the second order. Otherwise, the convergence of its corresponding estimator is not guarantee for $u_{ij}\in (0,1)$, a right neighborhood of $0$, in Definition \ref{def_MDAK} and it is said a consistent first-order smoother; see, e.g., \cite{KSK11} for discrete kernels. In general, $d$-under-dispersed count associated kernels are appropriated for both small and moderate sample sizes; see, e.g., \cite{KSK11} for univariate cases. As for the selection of the bandwidth $\mathbf{H}$, it is very crucial because it controls the degree of smoothing and the form of orientation of the kernel. As a matter of fact, a simplification can be obtained by considering a diagonal matrix $\mathbf{H}=\mathbf{diag}_d(h_j)$. Since it is challenging to obtain a full multivariate orthant distribution $\mathbf{K}_{\mathbf{x},\mathbf{H}}(\cdot)$ for building a smoother, several authors suggest the product of univariate orthant associated kernels,
\begin{equation}\label{KerProd}
\mathbf{K}_{\mathbf{x},\mathbf{H}}(\cdot)=\prod_{j=1}^{d} K_{x_j,h_{j}}(\cdot),
\end{equation}
where $K_{x_j,h_{j}}$, $j=1,\ldots,d$, belong either to the same family or to different families of univariate orthant associated kernels. The below two subsections shall be devoted to summaries of discrete and semicontinuous univariate associated kernels.

Before showing some main properties of the associated kernel estimator (\ref{NPE}), let us recall that the family of $d$-variate classical (symmetric) kernels $\mathbf{K}$ on $\mathbb{S}_d\subseteq\mathbb{R}^d$ (e.g., \cite{Scott92,Silverman86,ZAK14}) can be also presented as (classical) associated kernels. Indeed, from (\ref{NPE}) and writting for instance 
$$
\mathbf{K}_{\mathbf{x},\mathbf{H}}(\cdot)=(\det\mathbf{H})^{-1/2}\mathbf{K}\left[\mathbf{H}^{-1/2}(\mathbf{x}-\cdot)\right]
$$
where ``$\det$'' is the determinant operator, one has $\mathbb{S}_{\mathbf{x},\mathbf{H}}=\mathbf{x}-\mathbf{H}^{-1/2}\mathbb{S}_d$, $\mathbf{A}(\mathbf{x},\mathbf{H})=\mathbf{0}$ and $\mathbf{B}(\mathbf{x},\mathbf{H})=\mathbf{H}^{1/2}\boldsymbol{I}_d\mathbf{H}^{1/2}=\mathbf{H}$. In general, one uses the classical (associated) kernels for smoothing continuous data or pdf   having support $\mathbb{T}_d=\mathbb{R}^d$.

The purely nonparametric estimator (\ref{NPE}) with multivariate associated kernel, $\widetilde{f}_n$ of  $f$, is generally defined up to the normalizing constant $C_n$. Several simulation studies (e.g., \cite[Table 3.1]{KS18}) are shown that $C_n=C_n(\mathbf{K},\mathbf{H})$ (depending on samples, associated kernels and bandwidths) is approximatively $1$. Without loss of generality, one here assumes $C_n=1$ as for all classical (associated) kernel estimators of pdf. 
The following proposition finally  proves its mean behaviour and variability through the integrated bias and integrated variance of $\widetilde{f}_n$, respectively. In what follows, let us denote by $\boldsymbol{\nu}$ the reference measure (Lebesgue or counting) on the nonnegative orthant set $\mathbb{T}_d^+$ and also on any set $\mathbb{T}_d\subseteq\mathbb{R}^d$.

\begin{Pro}\label{PropCst_n}
	Let $C_n:=\int_{\mathbb{T}_d}\widetilde{f}_{n}(\mathbf{x})\boldsymbol{\nu}(d\mathbf{x})=C_n(\mathbf{K},\mathbf{H})$. Then, for all $n\geq 1$:
	\begin{equation*}
	\mathbb{E}(C_{n})=1+\int_{\mathbb{T}_d} \mathrm{Bias}\{\widetilde{f}_n(\mathbf{x})\}\boldsymbol{\nu}(d\mathbf{x})
	\;\;\;and\;\;\;
	\mathrm{var}(C_{n})=\int_{\mathbb{T}_d} \mathrm{var}\{\widetilde{f}_n(\mathbf{x})\}\boldsymbol{\nu}(d\mathbf{x}).
	\end{equation*}
\end{Pro}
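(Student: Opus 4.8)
The plan is to compute both expressions directly by interchanging expectation (or variance) with integration over $\mathbb{T}_d$, which is the natural route since $C_n$ is itself an integral of the random function $\widetilde f_n$. First I would write $\mathbb{E}(C_n)=\mathbb{E}\left[\int_{\mathbb{T}_d}\widetilde f_n(\mathbf{x})\,\boldsymbol{\nu}(d\mathbf{x})\right]$ and justify swapping $\mathbb{E}$ and $\int$ by Fubini--Tonelli: the integrand $\widetilde f_n(\mathbf{x})=n^{-1}\sum_{i=1}^n\mathbf{K}_{\mathbf{x},\mathbf{H}}(\mathbf{X}_i)$ is nonnegative (being an average of pdmf values), so Tonelli applies with no integrability hypothesis needed. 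This gives $\mathbb{E}(C_n)=\int_{\mathbb{T}_d}\mathbb{E}\{\widetilde f_n(\mathbf{x})\}\,\boldsymbol{\nu}(d\mathbf{x})$. Then I would insert the trivial identity $\mathbb{E}\{\widetilde f_n(\mathbf{x})\}=f(\mathbf{x})+\mathrm{Bias}\{\widetilde f_n(\mathbf{x})\}$ and use $\int_{\mathbb{T}_d}f(\mathbf{x})\,\boldsymbol{\nu}(d\mathbf{x})=1$ (recall $f$ is a genuine pdmf on $\mathbb{T}_d^+$, and $f\equiv 0$ on $\mathbb{T}_d\setminus\mathbb{T}_d^+$), which yields the first claimed formula $\mathbb{E}(C_n)=1+\int_{\mathbb{T}_d}\mathrm{Bias}\{\widetilde f_n(\mathbf{x})\}\,\boldsymbol{\nu}(d\mathbf{x})$.

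For the variance statement, the key observation is that $\widetilde f_n$ is an average of the i.i.d.\ random functions $\mathbf{x}\mapsto\mathbf{K}_{\mathbf{x},\mathbf{H}}(\mathbf{X}_i)$, so I would write $\mathrm{var}(C_n)=\mathrm{var}\left(\int_{\mathbb{T}_d}\widetilde f_n(\mathbf{x})\,\boldsymbol{\nu}(d\mathbf{x})\right)$ and push the variance through the integral. Concretely, set $g(\mathbf{X}_i):=\int_{\mathbb{T}_d}\mathbf{K}_{\mathbf{x},\mathbf{H}}(\mathbf{X}_i)\,\boldsymbol{\nu}(d\mathbf{x})$, so that $C_n=n^{-1}\sum_i g(\mathbf{X}_i)$ and, by independence, $\mathrm{var}(C_n)=n^{-1}\mathrm{var}\{g(\mathbf{X}_1)\}$. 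The same reduction applies pointwise: $\mathrm{var}\{\widetilde f_n(\mathbf{x})\}=n^{-1}\mathrm{var}\{\mathbf{K}_{\mathbf{x},\mathbf{H}}(\mathbf{X}_1)\}$. Hence the asserted identity $\mathrm{var}(C_n)=\int_{\mathbb{T}_d}\mathrm{var}\{\widetilde f_n(\mathbf{x})\}\,\boldsymbol{\nu}(d\mathbf{x})$ reduces, after cancelling the common factor $n^{-1}$, to the Fubini-type identity
\begin{equation*}
\mathrm{var}\left\{\int_{\mathbb{T}_d}\mathbf{K}_{\mathbf{x},\mathbf{H}}(\mathbf{X}_1)\,\boldsymbol{\nu}(d\mathbf{x})\right\}=\int_{\mathbb{T}_d}\mathrm{var}\{\mathbf{K}_{\mathbf{x},\mathbf{H}}(\mathbf{X}_1)\}\,\boldsymbol{\nu}(d\mathbf{x}).
\end{equation*}
Expanding $\mathrm{var}$ as $\mathbb{E}[(\cdot)^2]-(\mathbb{E}[\cdot])^2$ on both sides and using Tonelli to interchange $\mathbb{E}$ with the single and double $\boldsymbol{\nu}$-integrals, the left side produces $\int\!\!\int\mathbb{E}[\mathbf{K}_{\mathbf{x},\mathbf{H}}(\mathbf{X}_1)\mathbf{K}_{\mathbf{y},\mathbf{H}}(\mathbf{X}_1)]\,\boldsymbol{\nu}(d\mathbf{x})\boldsymbol{\nu}(d\mathbf{y})$ minus $\left(\int\mathbb{E}[\mathbf{K}_{\mathbf{x},\mathbf{H}}(\mathbf{X}_1)]\,\boldsymbol{\nu}(d\mathbf{x})\right)^2$, but the right side is the same double integral restricted to the diagonal part plus $-\int(\mathbb{E}[\mathbf{K}_{\mathbf{x},\mathbf{H}}(\mathbf{X}_1)])^2\boldsymbol{\nu}(d\mathbf{x})$; these agree precisely because variance is additive over the average and the cross terms vanish — so actually the cleanest argument is the first one, through $g(\mathbf{X}_1)$, avoiding the double integral altogether.

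The only real subtlety — and the step I would treat most carefully — is the legitimacy of every interchange of expectation and integration. For the mean this is costless (Tonelli, nonnegative integrand). For the variance, one needs $g(\mathbf{X}_1)=\int_{\mathbb{T}_d}\mathbf{K}_{\mathbf{x},\mathbf{H}}(\mathbf{X}_1)\,\boldsymbol{\nu}(d\mathbf{x})$ to have finite second moment, equivalently $\mathbb{E}[C_n^2]<\infty$; in the counting case $\mathbb{T}_d^+=\mathbb{N}^d$ this is automatic when $g$ is bounded, and in the semicontinuous case it holds under the same mild regularity (e.g.\ $C_n=C_n(\mathbf{K},\mathbf{H})<\infty$) already implicitly assumed throughout the section when one writes $C_n$ at all. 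I would state this finiteness as a standing hypothesis (it is the content of ``$C_n=C_n(\mathbf{K},\mathbf{H})$'' being well defined) and then the proof is just the two Fubini interchanges above together with the elementary facts $\int f\,d\boldsymbol{\nu}=1$ and $\mathrm{var}$ of an i.i.d.\ average equals $n^{-1}$ times the single-observation variance, both pointwise in $\mathbf{x}$ and for the integrated quantity $C_n$.
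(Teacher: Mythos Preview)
Your treatment of $\mathbb{E}(C_n)$ is correct and matches the paper's proof exactly: swap $\mathbb{E}$ and $\int$ by Tonelli, add and subtract $f(\mathbf{x})$, and use $\int f\,d\boldsymbol{\nu}=1$.

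For the variance, however, there is a genuine gap. Your reduction is fine up to the point where, after cancelling $n^{-1}$, the claim becomes
\[
\mathrm{var}\!\left\{\int_{\mathbb{T}_d}\mathbf{K}_{\mathbf{x},\mathbf{H}}(\mathbf{X}_1)\,\boldsymbol{\nu}(d\mathbf{x})\right\}
\;=\;
\int_{\mathbb{T}_d}\mathrm{var}\!\left\{\mathbf{K}_{\mathbf{x},\mathbf{H}}(\mathbf{X}_1)\right\}\boldsymbol{\nu}(d\mathbf{x}).
\]
But this identity is \emph{not} a Fubini statement and is false in general: the left side equals the double integral $\iint \mathrm{Cov}\bigl(\mathbf{K}_{\mathbf{x},\mathbf{H}}(\mathbf{X}_1),\mathbf{K}_{\mathbf{y},\mathbf{H}}(\mathbf{X}_1)\bigr)\boldsymbol{\nu}(d\mathbf{x})\boldsymbol{\nu}(d\mathbf{y})$, while the right side keeps only the diagonal $\mathbf{x}=\mathbf{y}$. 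The off-diagonal covariances do not vanish, since $\mathbf{K}_{\mathbf{x},\mathbf{H}}(\mathbf{X}_1)$ and $\mathbf{K}_{\mathbf{y},\mathbf{H}}(\mathbf{X}_1)$ are functions of the \emph{same} random vector $\mathbf{X}_1$. Your sentence ``these agree precisely because variance is additive over the average and the cross terms vanish'' conflates independence across the sample index $i$ (which you already used) with independence across the target argument $\mathbf{x}$ (which you do not have). Falling back on the ``cleanest argument through $g(\mathbf{X}_1)$'' does not rescue this: that route leads to exactly the displayed equality above.

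A concrete counterexample: for any classical (symmetric) kernel one has $C_n\equiv 1$ deterministically, hence $\mathrm{var}(C_n)=0$; yet $\int_{\mathbb{T}_d}\mathrm{var}\{\widetilde f_n(\mathbf{x})\}\,\boldsymbol{\nu}(d\mathbf{x})>0$ whenever the pointwise variance is not identically zero. So the variance formula, as stated, cannot be established by the argument you sketch. The paper's own proof offers no help here: it simply asserts that ``the second result on $\mathrm{var}(C_n)$ is trivial'' without computation. In short, your mean argument is complete and identical to the paper's; your variance argument breaks at the step where cross-covariances are dropped, and the paper supplies no alternative derivation to compare against.
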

\begin{proof}
	Let $n\geq 1$. One successively has
	$$\mathbb{E}(C_{n})=\int_{\mathbb{T}_d}\left[f(\mathbf{x})+ \mathbb{E}\{\widetilde{f}_n(\mathbf{x})\}-f(\mathbf{x})\right]\boldsymbol{\nu}(d\mathbf{x})
	=\int_{\mathbb{T}_d}f(\mathbf{x})\boldsymbol{\nu}(d\mathbf{x})+ \int_{\mathbb{T}_d}\left[\mathbb{E}\{\widetilde{f}_n(\mathbf{x})\}-f(\mathbf{x})\right]\boldsymbol{\nu}(d\mathbf{x}),$$
	which leads to the first result because $f$ is a pdmf on $\mathbb{T}_d$. The second result on $\mathrm{var}(C_{n})$ is trivial. 
\end{proof}

The following general result is easily deduced from Proposition \ref{PropCst_n}. 
To our knwoledge, it appears to be new and interesting in the framework of the pdmf (associated) kernel estimators.

\begin{Cor}\label{Coroll}
	If $C_n=1$, for all $n\geq 1$, then: $\int_{\mathbb{T}_d} \mathrm{Bias}\{\widetilde{f}_n(\mathbf{x})\}\boldsymbol{\nu}(d\mathbf{x})=0$ and $\int_{\mathbb{T}_d} \mathrm{var}\{\widetilde{f}_n(\mathbf{x})\}\boldsymbol{\nu}(d\mathbf{x})=0$.
\end{Cor}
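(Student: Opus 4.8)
The plan is to read the hypothesis ``$C_n=1$'' as the assertion that the random variable $C_n=\int_{\mathbb{T}_d}\widetilde{f}_n(\mathbf{x})\boldsymbol{\nu}(d\mathbf{x})$ is (almost surely) equal to the constant $1$, and then to feed this into the two identities of Proposition~\ref{PropCst_n}. A degenerate variable equal to $1$ has $\mathbb{E}(C_n)=1$ and $\mathrm{var}(C_n)=0$, so everything will follow by direct substitution; no estimate, no normalization argument, and no limiting argument is needed.

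First I would invoke the first identity of Proposition~\ref{PropCst_n}, namely $\mathbb{E}(C_n)=1+\int_{\mathbb{T}_d}\mathrm{Bias}\{\widetilde{f}_n(\mathbf{x})\}\boldsymbol{\nu}(d\mathbf{x})$, and substitute $\mathbb{E}(C_n)=1$; cancelling the $1$ appearing on both sides yields $\int_{\mathbb{T}_d}\mathrm{Bias}\{\widetilde{f}_n(\mathbf{x})\}\boldsymbol{\nu}(d\mathbf{x})=0$. Then I would invoke the second identity, $\mathrm{var}(C_n)=\int_{\mathbb{T}_d}\mathrm{var}\{\widetilde{f}_n(\mathbf{x})\}\boldsymbol{\nu}(d\mathbf{x})$, and substitute $\mathrm{var}(C_n)=0$ to obtain $\int_{\mathbb{T}_d}\mathrm{var}\{\widetilde{f}_n(\mathbf{x})\}\boldsymbol{\nu}(d\mathbf{x})=0$ immediately. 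Since $n\geq 1$ was arbitrary, both conclusions hold for every $n\geq 1$.

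There is essentially no obstacle here: the entire content of the corollary is carried by Proposition~\ref{PropCst_n}, and what remains is a one-line substitution in each case. The only point deserving a word of care is the interpretation of the hypothesis ``$C_n=1$'' as pointwise-in-the-sample equality rather than merely equality in mean; it is precisely the former that forces $\mathrm{var}(C_n)=0$ and hence the vanishing of the integrated variance. One may optionally remark that, because $\mathrm{var}\{\widetilde{f}_n(\mathbf{x})\}\geq 0$ for every $\mathbf{x}\in\mathbb{T}_d$, the second conclusion in fact entails $\mathrm{var}\{\widetilde{f}_n(\mathbf{x})\}=0$ for $\boldsymbol{\nu}$-almost every $\mathbf{x}$, a degeneracy that underlines why $C_n=1$ (as opposed to $C_n\approx 1$) is a genuinely restrictive assumption.
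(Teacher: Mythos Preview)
Your proposal is correct and follows exactly the route the paper intends: the corollary is stated as being ``easily deduced from Proposition~\ref{PropCst_n},'' and your substitution of $\mathbb{E}(C_n)=1$ and $\mathrm{var}(C_n)=0$ into the two identities is precisely that deduction. Your additional remark about the pointwise consequence for $\mathrm{var}\{\widetilde{f}_n(\mathbf{x})\}$ is a bonus observation not in the paper but perfectly sound.
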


In particular, Corollary \ref{Coroll} holds for all classical (associated) kernel estimators. 
The two following properties on the corresponding orthant multivariate associated kernels shall be needed subsequently.

\begin{description}
	\item\label{K1} ({\bf K1}) There exists the second moment of $\mathbf{K}_{\mathbf{x},\mathbf{H}}$:
	$$\mu_j^2(\mathbf{K}_{\mathbf{x},\mathbf{H}}):=\int_{\mathbb{S}_{\mathbf{x},\mathbf{H}}\bigcap\mathbb{T}_d^+} u_j^2\mathbf{K}_{\mathbf{x},\mathbf{H}}(\mathbf{u})\boldsymbol{\nu}(d\mathbf{u})<\infty,\;\;\;\forall j=1,\ldots,d.$$
	\item\label{K2} ({\bf K2}) There exists a real largest number $r=r(\mathbf{K}_{\mathbf{x},\mathbf{H}})>0$ and $0<c(\mathbf{x})<\infty$ such that 
	$$||\mathbf{K}_{\mathbf{x},\mathbf{H}}||_2^2:=\int_{\mathbb{S}_{\mathbf{x},\mathbf{H}}\bigcap\mathbb{T}_d^+} \{\mathbf{K}_{\mathbf{x},\mathbf{H}}(\mathbf{u})\}^2\boldsymbol{\nu}(d\mathbf{u})\leq c(\mathbf{x})(\det\mathbf{H})^{-r}.
	$$
\end{description}
In fact, ({\bf K1}) is a necessary condition for smoothers to have a finite variance and ({\bf K2}) can be deduced from the continuous univariate cases (e.g., \cite{KL18}) and also from the discrete ones (e.g., \cite{KSK11}).

We now establish both general asymptotic behaviours of the  pointwise bias and variance of the nonparametric estimator (\ref{NPE}) on the  nonnegative orthant set $\mathbb{T}_d^+$; its proof is given in Appendix \ref{AppendixB}. For that, we need the following assumptions by endowing $\mathbb{T}_d^+$ with the Euclidean norm $||\cdot||$ and the associated inner product $\langle\cdot,\cdot\rangle$ such that $\langle\mathbf{a},\mathbf{b}\rangle=\mathbf{a}^\top\mathbf{b}$. 
\begin{description}
	\item\label{a1} ({\bf a1}) The unknown pdmf $f$ is bounded function and twice differentiable or finite difference in $\mathbb{T}_d^+$ and $\nabla f(\mathbf{x})$ and $\mathcal{H}f(\mathbf{x})$ denote respectively the gradient vector (in continuous or discrete sense, respectively) and the corresponding Hessian matrix of the function $f$ at $\mathbf{x}$.
	
	\item\label{a2} ({\bf a2}) There exists a positive real  number $r>0$   such that $||K_{\mathbf{x},\mathbf{H}_n}||_2^2(\det\mathbf{H}_n)^{r}\to c_1(\mathbf{x})>0$ as $n\to\infty$.
\end{description}

Note that ({\bf a2}) is obviously a consequence of ({\bf K2}).

\begin{Pro}\label{PropBiasVarf(x)}
	Under the assumption ({\bf a1}) on $f$, then the estimator $\widetilde{f}_n$ in (\ref{NPE}) of $f$ verifies
	\begin{equation}\label{Biais} 
	\mathrm{Bias}\{\widetilde{f}_n(\mathbf{x})\}
	=\left\langle\nabla f(\mathbf{x}), \mathbf{A}\left(\mathbf{x}, \mathbf{H}_n\right)\right\rangle+
	\frac{1}{2} \operatorname{tr} \left\{{\cal H} f\left(\mathbf{x}\right)\left[\mathbf{B}(\mathbf{x},\mathbf{H}_n)+\mathbf{A}\left(\mathbf{x},\mathbf{H}_n\right)^\mathsf{T}\mathbf{A}\left(\mathbf{x},\mathbf{H}_n\right)\right]\right\}
	+o\left\{\operatorname{tr}\left[\mathbf{B}(\mathbf{x},\mathbf{H}_n)\right]\right\}, 
	\end{equation}  
	for any $\mathbf{x}\in\mathbb{T}_d^+$. Moreover, if ({\bf a2}) holds then
	\begin{equation}\label{Variance}
	\mathrm{var}\{\widetilde{f}_n(\mathbf{x})\}
	=\frac{1}{n}f(\mathbf{x})||\mathbf{K}_{\mathbf{x},\mathbf{H}_n}||_2^2+o\left[\frac{1}{n(\det\mathbf{H}_n)^r}\right].
	\end{equation}
\end{Pro}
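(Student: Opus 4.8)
The plan is to handle the two pointwise quantities separately, in each case using the i.i.d.\ structure to reduce to a single summand and then expanding. For the bias, because $\mathbf{X}_1,\dots,\mathbf{X}_n$ are i.i.d.\ with pdmf $f$ and $\mathbf{K}_{\mathbf{x},\mathbf{H}_n}(\cdot)$ is the pdmf of $\mathcal{Z}_{\mathbf{x},\mathbf{H}_n}$, one has $\mathbb{E}\{\widetilde f_n(\mathbf{x})\}=\int_{\mathbb{S}_{\mathbf{x},\mathbf{H}_n}\cap\mathbb{T}_d^+}\mathbf{K}_{\mathbf{x},\mathbf{H}_n}(\mathbf{u})f(\mathbf{u})\,\boldsymbol{\nu}(d\mathbf{u})=\mathbb{E}\{f(\mathcal{Z}_{\mathbf{x},\mathbf{H}_n})\}$. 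Invoking ({\bf a1}) I would expand $f$ around $\mathbf{x}$ by a second-order Taylor (continuous case) or finite-difference (discrete case) formula,
$$f(\mathcal{Z}_{\mathbf{x},\mathbf{H}_n})=f(\mathbf{x})+\langle\nabla f(\mathbf{x}),\,\mathcal{Z}_{\mathbf{x},\mathbf{H}_n}-\mathbf{x}\rangle+\tfrac12(\mathcal{Z}_{\mathbf{x},\mathbf{H}_n}-\mathbf{x})^{\top}\mathcal{H}f(\mathbf{x})(\mathcal{Z}_{\mathbf{x},\mathbf{H}_n}-\mathbf{x})+R_n(\mathbf{x}),$$
and take expectations. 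Substituting $\mathbb{E}(\mathcal{Z}_{\mathbf{x},\mathbf{H}_n}-\mathbf{x})=\mathbf{A}(\mathbf{x},\mathbf{H}_n)$ and using the elementary identity $\mathbb{E}\{(\mathbf{Z}-\mathbf{x})^{\top}M(\mathbf{Z}-\mathbf{x})\}=\operatorname{tr}\{M[\operatorname{cov}\mathbf{Z}+(\mathbb{E}\mathbf{Z}-\mathbf{x})(\mathbb{E}\mathbf{Z}-\mathbf{x})^{\top}]\}$ with $\operatorname{cov}\mathcal{Z}_{\mathbf{x},\mathbf{H}_n}=\mathbf{B}(\mathbf{x},\mathbf{H}_n)$ produces the two leading terms of (\ref{Biais}); it then remains to check $\mathbb{E}\{R_n(\mathbf{x})\}=o(\operatorname{tr}[\mathbf{B}(\mathbf{x},\mathbf{H}_n)])$, which I would get from the fact that the Taylor remainder is $o(\|\mathcal{Z}_{\mathbf{x},\mathbf{H}_n}-\mathbf{x}\|^2)$ together with $\|\mathbf{A}(\mathbf{x},\mathbf{H}_n)\|^2\to0$ and $\operatorname{tr}[\mathbf{B}(\mathbf{x},\mathbf{H}_n)]\to0$ (Definition \ref{def_MDAK}), after dominating so as to pass the little-$o$ past $\mathbb{E}$.

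For the variance, independence gives $\operatorname{var}\{\widetilde f_n(\mathbf{x})\}=n^{-1}\operatorname{var}\{\mathbf{K}_{\mathbf{x},\mathbf{H}_n}(\mathbf{X}_1)\}=n^{-1}\big(\mathbb{E}\{\mathbf{K}_{\mathbf{x},\mathbf{H}_n}(\mathbf{X}_1)^2\}-(\mathbb{E}\{\mathbf{K}_{\mathbf{x},\mathbf{H}_n}(\mathbf{X}_1)\})^2\big)$, the moments being finite by ({\bf K1})--({\bf K2}) and boundedness of $f$. I would then write $\mathbb{E}\{\mathbf{K}_{\mathbf{x},\mathbf{H}_n}(\mathbf{X}_1)^2\}=f(\mathbf{x})\|\mathbf{K}_{\mathbf{x},\mathbf{H}_n}\|_2^2+\int\mathbf{K}_{\mathbf{x},\mathbf{H}_n}(\mathbf{u})^2[f(\mathbf{u})-f(\mathbf{x})]\,\boldsymbol{\nu}(d\mathbf{u})$ and show the last integral is $o(\|\mathbf{K}_{\mathbf{x},\mathbf{H}_n}\|_2^2)$: on a small ball about $\mathbf{x}$, $|f(\mathbf{u})-f(\mathbf{x})|$ is arbitrarily small by continuity of $f$, while on its complement the $\mathbf{K}_{\mathbf{x},\mathbf{H}_n}^2$-mass is negligible because $\mathcal{Z}_{\mathbf{x},\mathbf{H}_n}$ concentrates at $\mathbf{x}$ ($\mathbb{E}\mathcal{Z}_{\mathbf{x},\mathbf{H}_n}\to\mathbf{x}$, $\operatorname{cov}\mathcal{Z}_{\mathbf{x},\mathbf{H}_n}\to\mathbf{0}_d$) and $f$ is bounded. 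Since $\mathbb{E}\{\mathbf{K}_{\mathbf{x},\mathbf{H}_n}(\mathbf{X}_1)\}\to f(\mathbf{x})$ by the bias step, the subtracted square is $O(1)$ and contributes only $O(n^{-1})$; finally ({\bf a2}) yields $\|\mathbf{K}_{\mathbf{x},\mathbf{H}_n}\|_2^2\sim c_1(\mathbf{x})(\det\mathbf{H}_n)^{-r}\to\infty$, so both this $O(n^{-1})$ term and $n^{-1}o(\|\mathbf{K}_{\mathbf{x},\mathbf{H}_n}\|_2^2)$ are $o[1/(n(\det\mathbf{H}_n)^r)]$, which is (\ref{Variance}).

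The routine part is the two moment identities. The main obstacle is the rigorous control of the two $o(\cdot)$ error terms: in the bias, interchanging the Taylor (or finite-difference) remainder with the expectation — harmless in the discrete case but in the continuous case needing a growth bound on $f$ and a bound on the moments of $\mathcal{Z}_{\mathbf{x},\mathbf{H}_n}$ — and in the variance, quantifying the concentration of $\mathcal{Z}_{\mathbf{x},\mathbf{H}_n}$ sharply enough that the off-diagonal $\mathbf{K}_{\mathbf{x},\mathbf{H}_n}^2$-mass is negligible relative to $\|\mathbf{K}_{\mathbf{x},\mathbf{H}_n}\|_2^2$; this is precisely where Definition \ref{def_MDAK} together with ({\bf K1})--({\bf K2})/({\bf a2}) do the real work.
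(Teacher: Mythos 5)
Your proposal is correct and takes essentially the same route as the paper: reduce the bias to $\mathbb{E}\{f(\mathcal{Z}_{\mathbf{x},\mathbf{H}_n})\}-f(\mathbf{x})$ and Taylor-expand to second order using $\mathbb{E}\mathcal{Z}_{\mathbf{x},\mathbf{H}_n}=\mathbf{x}+\mathbf{A}(\mathbf{x},\mathbf{H}_n)$ and $\operatorname{cov}\mathcal{Z}_{\mathbf{x},\mathbf{H}_n}=\mathbf{B}(\mathbf{x},\mathbf{H}_n)$, then obtain the variance from the i.i.d.\ reduction, the approximation $f(\mathbf{u})\approx f(\mathbf{x})$ under $\mathbf{K}^2_{\mathbf{x},\mathbf{H}_n}$, and assumption ({\bf a2}). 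The only cosmetic difference is that you expand $f$ once around $\mathbf{x}$, producing $\mathbf{B}(\mathbf{x},\mathbf{H}_n)$ plus the outer product of $\mathbf{A}(\mathbf{x},\mathbf{H}_n)$ in a single stroke, whereas the paper expands first around $\mathbb{E}\mathcal{Z}_{\mathbf{x},\mathbf{H}_n}$ and then a second time between $\mathbf{x}+\mathbf{A}(\mathbf{x},\mathbf{H}_n)$ and $\mathbf{x}$; the handling of the $o(\cdot)$ remainders is at the same level of detail in both arguments.
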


For $d=1$ and according to the proof of Proposition \ref{PropBiasVarf(x)}, one can easily write $\mathbb{E}\widehat{f}_n(x)$ as follows:
$$
\mathbb{E}\widehat{f}_n(x)=\mathbb{E}f(\mathcal{Z}_{x,h})=\sum_{k\geq 0}\frac{1}{k!}\mathbb{E}\left(\mathcal{Z}_{x,h}-\mathbb{E}\mathcal{Z}_{x,h}\right)^kf^{(k)}(\mathbb{E}\mathcal{Z}_{x,h}),
$$
where $f^{(k)}$ is the $k$th derivative or finite difference of the pdmf $f$ under the existence of the centered moment of order $k\geq 2$ of $\mathcal{Z}_{x,h}$.

Concerning bandwidth matrix selections in a multivariate associated kernel estimator (\ref{NPE}), one generally use the cross-validation technique (e.g., \cite{KSK11,KSKB09,KSKZ07,KS18,WSK16}). However, it is tedious and less precise. Many papers have recently proposed Bayesian approaches (e.g., \cite{Belaid16,Belaid18,Some20,SK20,Ziane15,ZAK16}  and references therein). In particular, they have recommended local Bayesian for discrete smoothing of pmf (e.g., \cite{Belaid16,Belaid18,Belaid20}) and adaptive one for continuous smoothing of pdf (e.g., \cite{Some20,SK20,Ziane15}).

Denote $\mathcal{M}$ the set of positive definite [diagonal]  matrices [from (\ref{KerProd}), resp.] and let $\pi$ be a given suitable prior distribution on $\mathcal{M}$. 
Under the squared error loss function, the Bayes estimator of $\mathbf{H}$ is the mean of the posterior distribution. Then, the local Bayesian bandwidth at the target $\mathbf{x}\in\mathbb{T}_d^+$ takes the form 
\begin{equation}\label{BayesLoc}
\widetilde{\mathbf{H}}(\mathbf{x}):=\int_{\mathcal{M}}\mathbf{H}\pi(\mathbf{H})\widetilde{f}_{n}(\mathbf{x})d\mathbf{H}\left[\int_{\mathcal{M}}\widetilde{f}_{n}(\mathbf{x})\pi(\mathbf{H})d\mathbf{H}\right]^{-1},\;\;\mathbf{x}\in\mathbb{T}_d^+,
\end{equation}
and the adaptive Bayesian bandwidth for each observation $\mathbf{X}_i\in\mathbb{T}_d^+$ of $\mathbf{X}$ is given by
\begin{equation}\label{BayesAdap}
\widetilde{\mathbf{H}}_i:=\int_{\mathcal{M}}\mathbf{H}_i\pi(\mathbf{H}_i)\widetilde{f}_{n,\mathbf{H}_i,-i}(\mathbf{X}_i)d\mathbf{H}_i\left[\int_{\mathcal{M}}\widetilde{f}_{n,\mathbf{H}_i,-i}(\mathbf{X}_i)\pi(\mathbf{H}_i)d\mathbf{H}_i\right]^{-1},\;\;i=1,\ldots,n,
\end{equation}
where $\widetilde{f}_{n,\mathbf{H}_i,-i}(\mathbf{X}_i)$ is the leave-one-out associated kernel estimator of $f(\mathbf{X}_i)$ deduced from (\ref{NPE}) as
\begin{equation}\label{NPE_loo}
\widetilde{f}_{n,\mathbf{H}_i,-i}(\mathbf{X}_i):=
\frac{1}{n-1}\sum_{\ell=1,\ell\neq i}^{n} \mathbf{K}_{\mathbf{X}_i,\mathbf{H}_i}(\mathbf{X}_\ell).
\end{equation}
Note that the (global) cross-validation bandwidth matrix $\widetilde{\mathbf{H}}_{CV}$ and the global Bayesian one $\widetilde{\mathbf{H}}_{B}$ are obtained, respectively, from (\ref{NPE_loo}) as 
$$\widetilde{\mathbf{H}}_{CV}:=\mathrm{arg}\min_{\mathbf{H}\in\mathcal{M}}\left[\int_{\mathbb{T}_d^+}\{\widetilde{f}_{n}(\mathbf{x})\}^2\boldsymbol{\nu}(d\mathbf{x})-\frac{2}{n}\sum_{i=1}^n\widetilde{f}_{n,\mathbf{H},-i}(\mathbf{X}_i)\right]
$$
and 
\begin{equation*}\label{BayesGlob}
\widetilde{\mathbf{H}}_B:=\int_{\mathcal{M}}\mathbf{H}\pi(\mathbf{H})d\mathbf{H}\prod_{i=1}^n\widetilde{f}_{n,\mathbf{H},-i}(\mathbf{X}_i)d\mathbf{H}\left[\int_{\mathcal{M}}\pi(\mathbf{H})d\mathbf{H}\prod_{i=1}^n\widetilde{f}_{n,\mathbf{H},-i}(\mathbf{X}_i)\right]^{-1}.
\end{equation*}

\subsection{Discrete Associated Kernels}

We only present three main and useful families of univariate  discrete associated kernels for (\ref{KerProd}) and satisfying ({\bf K1}) and ({\bf K2}). 

\begin{Ex}[categorical]\label{ExCateg}
	For fixed $c\in\{2,3,\ldots\}$, the number of categories and $\mathbb{T}_1^+=\{0,1,\ldots,c-1\}$, one defines the Dirac discrete uniform (DirDU) kernel by
	$$K_{x,h}^{DirDU}(u)=(1-h)^{\mathds{1}_{u=x}}\left(\frac{h}{c-1}\right)^{1-\mathds{1}_{u=x}},$$ for $x\in\{0,1,\ldots,c-1\}$, $h\in(0,1]$, 
	with $\mathbb{S}_x:=\{0,1,\ldots,c-1\}=\mathbb{T}_1^+$, $A(x,h)=h\{c/2-x-x/(c-1)\}$ and $B(x,h)=h\{c(2c-1)/6+x^2-xc+x^2/(c-1)\}-h^2\{c/2-x-x/(c-1)\}^2$.
\end{Ex}
It has been introduced in multivariate setup by Aitchison and Aitken \cite{Aitchison76} and investigated as a discrete associated kernel which is symmetric to the target $x$ by \cite{KSK11} in univariate case; see \cite{Belaid18} for a Bayesian approach in multivariate setup. Note here that its normalized constant is always $1=C_n$. 

\begin{Ex}[symmetric count]\label{ExCount}
	For fixed $m\in\mathbb{N}$ and $\mathbb{T}_1^+\subseteq\mathbb{Z}$, the symmetric count triangular kernel is expressed as
	$$K_{x,h}^{SCTriang}(u)=\frac{(m+1)^h-|u-x|^h}{P(m,h)}\mathds{1}_{\{x,x\pm 1,\ldots,x\pm m\}}(u),$$
	for $x\in\mathbb{T}_1^+$, $h>0$, 
	with $\mathbb{S}_x:=\{x,x\pm 1,\ldots,x\pm m\}$, $P(m,h)=(2m+1)(m+1)^h-2\sum_{\ell=0}^m\ell^h$,  $A(x,h)=0$ and
	\begin{eqnarray*}
		B(x,h) & = & \frac{1}{P(m,h)}\left\{\frac{m(2m+1)(m+1)^{h+1}}{3}-2\sum_{\ell=0}^m\ell^{h+2}\right\}\\
		& \simeq & h\left\{\frac{m(2m^2+3m+1)}{3}\log(m+1)-2\sum_{\ell=1}^m\ell^{2}\log\ell\right\}+O(h^2),
	\end{eqnarray*}
	where $\simeq$ holds for $h$ sufficiently small.
\end{Ex}
It has been first proposed by Kokonendji {\it et al.} \cite{KSKZ07} and then completed in \cite{KZ10} with an asymmetric version for solving the problem of boundary bias in count kernel estimation.

\begin{Ex}[standard count]\label{ExCountSt}
	Let $\mathbb{T}_1^+\subseteq\mathbb{N}$, the standard binomial kernel is defined by
	$$K_{x,h}^{Binomial}(u)=\frac{(x+1)!}{u!(x+1-u)!}\left(\frac{x+h}{x+1}\right)^u\left(\frac{1-h}{x+1}\right)^{x+1-u}\mathds{1}_{\{0,1,\ldots,x+1\}}(u),$$
	for $x\in\mathbb{T}_1^+$, $h\in(0,1]$, 
	with $\mathbb{S}_x:=\{0,1,\ldots,x+1\}$, $A(x,h)=h$ and $B(x,h)=(x+h)(1-h)/(x+1)\to x/(x+1)\in [0,1]$ as $h\to 0$.
\end{Ex}

Here $B(x,h)$ tends to $x/(x+1)\in [0,1)$ when $h\to 0$ and the new Definition \ref{def_MDAK} holds. This first-order and under-dispersed  binomial kernel is introduced in \cite{KSK11} which becomes very useful for smoothing count distribution through small or moderate sample size; see, e.g., \cite{Belaid16,Belaid18,Belaid20} for Bayesian approaches and some references therein. In addition, we have the standard Poisson kernel where $K_{x,h}^{Poisson}$ follows the equi-dispersed Poisson distribution with mean $x+h$, $\mathbb{S}_x:=\mathbb{N}=:\mathbb{T}_1^+$, $A(x,h)=h$ and $B(x,h)=x+h\to x\in\mathbb{N}$ as $h\to 0$. Recently, Huang {\it et al.} \cite{Huang20} have introduced the Conway-Maxwell-Poisson kernel by exploiting its under-dispersed part and its second-order consistency which can be improved via the mode-dispersion approach of \cite{LK17}; see also \cite[Section 2.4]{HuangA20}.

\subsection{Semicontinuous Associated Kernels}

Now, we point out eight main and useful families of univariate semicontinuous associated kernels  for (\ref{KerProd}) and satisfying ({\bf K1}) and ({\bf K2}). Which are gamma (G) of \cite{Chen00} (see also \cite{Hirukawa15}), inverse gamma (Ig) (see also \cite{Mousa16}) and log-normal 2 (LN2) by \cite{LK17}, inverse Gaussian (IG) and reciprocal inverse Gaussian by \cite{Scaillet04} (see also \cite{Igarashi14}), log-normal 1 (LN1) and Birnbaum-Saunders by \cite{Jin03} (see also \cite{Marchant13,Mombeni19}), and Weibull (W) of \cite{Salha14} (see also \cite{Mombeni19}). It is noteworthy that the link between LN2 of \cite{LK17} and LN1 of \cite{Jin03} is through changing $(x,h)$ to $(x\exp(h^2),2\sqrt{\log(1+h})$. Several other semicontinuous could be constructed by using the  mode-dispersion technique of \cite{LK17} from any semicontinuous distribution which is unimodal and having a dispersion parameter. Recently, one has the scaled inverse chi-squared kernel of \cite{Ercelik20}.

Table \ref{Table_1sAK} summarizes these eight semicontinuous univariate associated kernels with their ingredients of Definition \ref{def_MDAK} and order of preference (O.) obtained graphically. In fact, the heuristic classification (O.) is done through the behaviour of the shape and scale of the associated kernel around the target $x$ at the edge as well as inside; see Figure \ref{fig_03ab} for edge and Figure \ref{fig_23ab} for inside. Among these eight kernels, we thus have to recommend the five first univariate associated kernels of Table \ref{Table_1sAK} for smoothing semicontinuous data. This approach could be improved by a given dataset; see, e.g.,  \cite{Lafaye20} for cumulative functions.

\section{Semiparametric Kernel Estimation with $d$-Variate Parametric Start}\label{4.Semiparam}

We investigate the semiparametric orthant kernel approach which is a compromise between the pure parametric and the nonparametric methods. This concept was proposed by \citet{HjortG95} for continuous data, treated by \citet{KSKB09} for discrete univariate data and, recently, studied by \citet{KZSK17} with an application to radiation biodosimetry. 

\begin{landscape}
	\begin{table}[htp]
	\vspace{3.6cm}
		\caption{Eight semicontinuous univariate associated kernels on $\mathbb{S}_{x,h}\subseteq [0,\infty)$ and classifyed by "O."}\label{Table_1sAK}
		\begin{tabular}{lllll}
			\toprule
			\textbf{O.}	& \textbf{Name}	 & \textbf{$K_{x,h}(u)$} & \textbf{$A(x,h)$} & \textbf{$B(x,h)$} \\
			\midrule
			1 & LN2 \cite{LK17} & $(uh\sqrt{2\pi})^{-1}\exp\left(\left[\log\{x\exp(h^2)\}-\log u\right]/2h^2\right)$ & $x[\exp(3h^2\!/2)\!-\!1]$ & $x^2\exp(3h^2)[\exp(h^2)-1]$ \\
			2 & W \cite{Salha14} & $[\Gamma(h)/x][u\Gamma(1+h)/x]^{1/h-1}\exp\left\{-[u\Gamma(1+h)/x]^{1/h}\right\}$ & $0$ & $x^2\!\left[\Gamma(1\!+\!2h)/\Gamma^2(1\!+\!2h)\!-\!1\right]$ \\
			3 & G \cite{Chen00} & $h^{-1-x/h}u^{x/h}\exp(-u/h)/\Gamma(1+x/h)$ & $h$ & $(x+h)h$ \\
			4 & BS \cite{Jin03} & $(uh\sqrt{2\pi})^{-1}\!\left[(xu)^{-1/2}\!+\!(x/u^3)^{-1/2}\right]\exp\left[(2\!-\!u/x\!-\!x/u)/2h\right]$ & $xh/2$ & $x^2h(2+5h/2)/2$ \\
			5 & Ig \cite{LK17} & $h^{1-1/xh}u^{-1/xh}\exp(-1/uh)/\Gamma(1/xh -1)$ & $2x^2h/(1-2xh)$ & $x^3h/[(1-3xh)(1-2xh)^2]$ \\
			6 & RIG \cite{Scaillet04} & $(\sqrt{2\pi uh})^{-1}\exp\left\{[x-h][2-(x-h)/u-u/(x-h)]/2h\right\}$ &$0$ & $(x-h)h$ \\
			7 & IG \cite{Scaillet04} & $(\sqrt{2\pi hu^3})^{-1}\exp\left\{[2-x/u-u/x)]/2hx\right\}$ & $0$ & $x^3h$\\
			8 & LN1 \cite{Jin03} & $(u\sqrt{8\pi\log(1+h)})^{-1}\exp\left(-[\log u-\log x]^2/[8\log(1+h)]\right)$ & $xh(h+2)$ & $x^2(1+h)^4[(1+h)^4-1]$ \\
			\bottomrule\noalign{\smallskip}
		\end{tabular}
		$\Gamma(v):=\int_0^\infty s^{v-1}\exp(-s)ds$ is the classical gamma function with $v>0$. 
	\end{table}
\end{landscape}

\begin{figure}[htp]
	\centering
	\includegraphics[width=7.5cm,height=6cm]{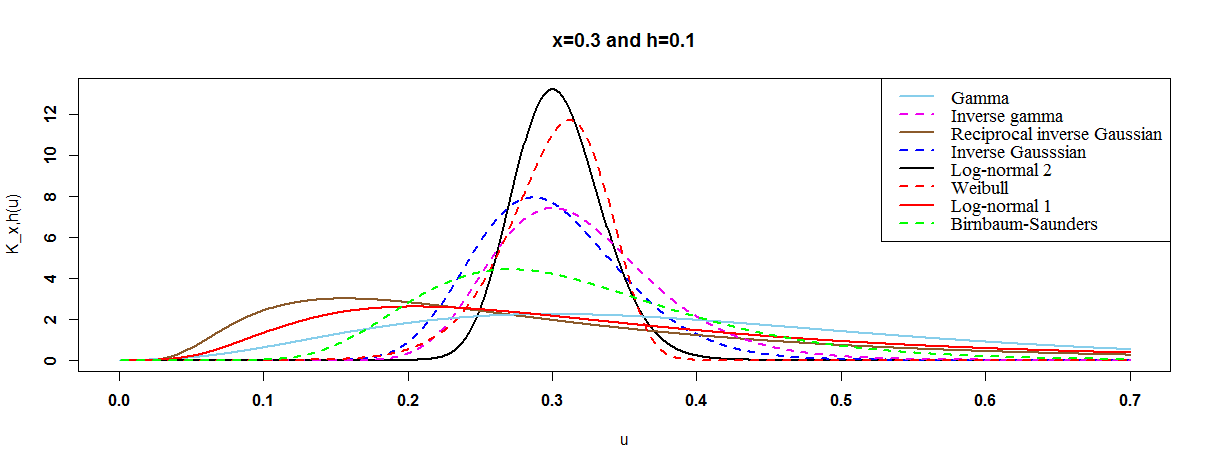}
	\includegraphics[width=7.5cm,height=6cm]{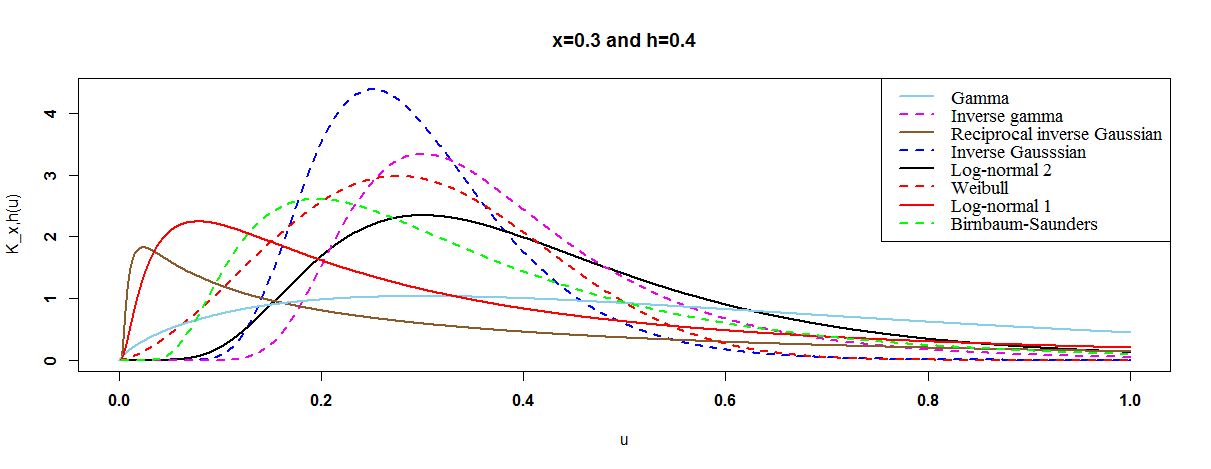}
	\caption{Comparative graphics of the eight univariate semicontinuous associated kernels of Table \ref{Table_1sAK} on the edge ($x=0.3$) with $h=0.1$ and $h=0.4$.}\label{fig_03ab}
\end{figure} 
\begin{figure}[H]
	\centering
	\includegraphics[width=7.5cm,height=6cm]{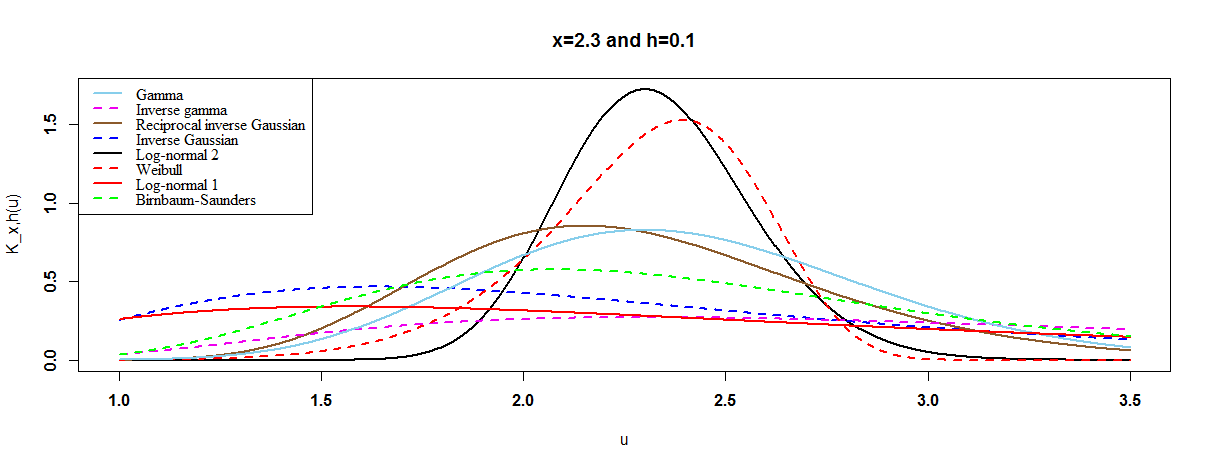}
	\includegraphics[width=7.5cm,height=6cm]{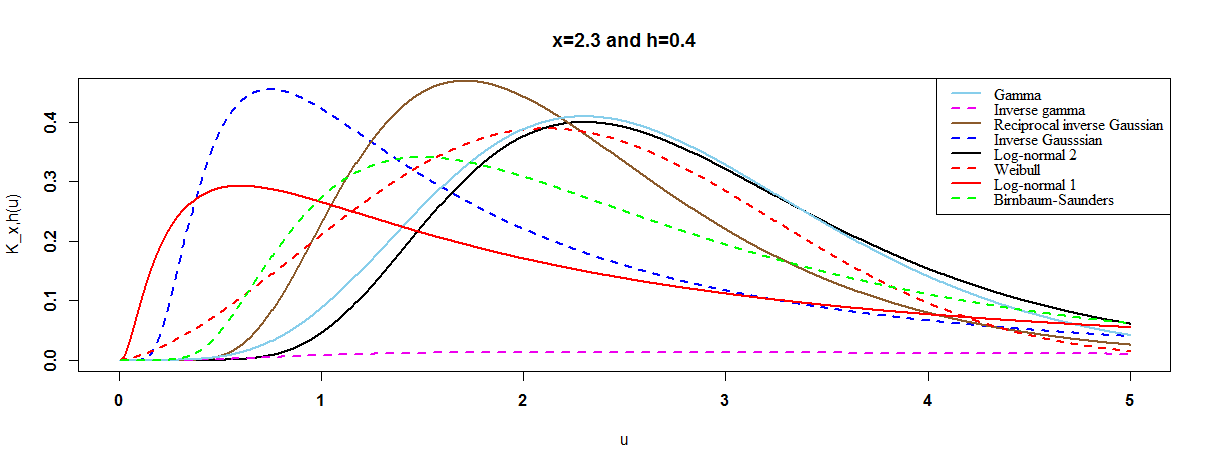}
	\caption{Comparative graphics of the eight univariate semicontinuous associated kernels of Table \ref{Table_1sAK} inside ($x=2.3$) with $h=0.1$ and $h=0.4$.}\label{fig_23ab}
\end{figure}     

Without loss of generality, we here assume that any $d$-variate pdmf $f$ can be formulated (e.g., \cite{KTA20} for $d=1$) as
\begin{equation}\label{semiP}
f(\mathbf{x})=w(\mathbf{x};\boldsymbol{\theta})\,p_{d}(\mathbf{x};\boldsymbol{\theta}),\;\;\; \forall \mathbf{x}\in \mathbb{T}_d^+,
\end{equation}
where $p_d(\cdot;\boldsymbol{\theta})$ is the non-singular parametric part according to a reference $d$-variate distribution with corresponding unknown  parameters $\boldsymbol{\theta}=(\theta_1,\ldots, \theta_k)^\top$ and $w(\cdot;\boldsymbol{\theta}):=f(\cdot)/p_d(\cdot;\boldsymbol{\theta})$ is the unknown orthant weight function part, to be estimated with a multivariate orthant associated kernel. The weight function at each point can be considered as the local multiplicative correction factor aimed to accommodate any pointwise departure from the reference $d$-variate distribution. However, one cannot consider the best fit of parametric models as the start distribution in this semiparametric approach. Because the corresponding weight function is close to zero and becomes a noise  which is unappropriated to smooth by an associated kernel, especially for the continuous cases.

Let $\mathbf{X}_1,\ldots,\mathbf{X}_n$ be iid  nonnegative orthant $d$-variate random vectors with unknown pdmf $f$ on $\mathbb{T}_{d}^+\subseteq [0,\infty)^d$. The semiparametric estimator of (\ref{semiP}) with (\ref{KerProd}) is expressed as follows:
\begin{eqnarray}\label{SME}
\widehat{f}_{n}(\mathbf{x})&=& p_{d}(\mathbf{x};\widehat{\boldsymbol{\theta}}_n)\frac{1}{n}\sum_{i=1}^{n}\frac{1}{p_{d}(\mathbf{X}_{i};\widehat{\boldsymbol{\theta}}_n)}\mathbf{K}_{\mathbf{x},\mathbf{H}}(\mathbf{X}_i)\nonumber\\
&=& \frac{1}{n} \sum_{i=1}^{n}\frac{p_{d}(\mathbf{x};\widehat{\boldsymbol{\theta}}_n)}{p_{d}(\mathbf{X}_{i};\widehat{\boldsymbol{\theta}}_n)}\mathbf{K}_{\mathbf{x},\mathbf{H}}(\mathbf{X}_i),\;\;\;\mathbf{x}\in\mathbb{T}_d^+, 
\end{eqnarray}
where $\widehat{\boldsymbol{\theta}}_n$ is the estimated parameter of $\boldsymbol{\theta}$.  From (\ref{SME}), we then deduce the nonparametric orthant associated kernel estimate
\begin{equation}\label{w_tilde}
\widetilde{w}_{n}(\mathbf{x};\widehat{\boldsymbol{\theta}}_n)= \frac{1}{n}\sum_{i=1}^{n}\frac{1}{p_{d}(\mathbf{X}_{i};\widehat{\boldsymbol{\theta}}_n)}\mathbf{K}_{\mathbf{x},\mathbf{H}}(\mathbf{X}_i)
\end{equation}
of the weight function $x \mapsto w(\mathbf{x};\widehat{\boldsymbol{\theta}_{n}})$ which depends  on $\widehat{\boldsymbol{\theta}}_n$. One can observe that Proposition \ref{PropCst_n} also holds for $\widehat{f}_{n}(\cdot)= p_{d}(\cdot ;\widehat{\boldsymbol{\theta}}_n) \widetilde{w}_{n}(\cdot ;\widehat{\boldsymbol{\theta}}_n)$. However, we have to prove below the analogous of Proposition \ref{PropBiasVarf(x)}.

\subsection{Known $d$-Variate Parametric Model}

Let $p_{d}(\cdot;\boldsymbol{\theta}_0)$ be a fixed orthant distribution in (\ref{semiP}) with $\boldsymbol{\theta}_0$ known. Writing $f(\mathbf{x})=p_{d}(\mathbf{x};\boldsymbol{\theta}_0)\,w(\mathbf{x})$, we estimate the nonparametric weight function $w$ by $\widetilde{w}_n(\mathbf{x})=n^{-1}\sum_{i=1}^n \mathbf{K}_{\mathbf{x},\mathbf{H}}(\mathbf{X}_i)/p_{d}(\mathbf{X}_i;\boldsymbol{\theta}_0)$ with an orthant associated kernel method, resulting in the estimator
\begin{equation}\label{MP01}
\widehat{f}_{n}(\mathbf{x})=p_{d}(\mathbf{x};\boldsymbol{\theta}_0)\widetilde{w}_n(\mathbf{x})=\frac{1}{n} \sum\limits_{i=1}^{n} \frac{p_{d}
	(\mathbf{x};\boldsymbol{\theta}_0)}{p_{d}(\mathbf{X}_{i};\boldsymbol{\theta}_0)}\mathbf{K}_{\mathbf{x},\mathbf{H}}(\mathbf{X}_i),\;\;\;\;\;\mathbf{x}\in\mathbb{T}_d^+.\end{equation}
The following proposition is proven in Appendix \ref{AppendixB}.

\begin{Pro}\label{PropBiasVarf(x,0)}
	Under the assumption ({\bf a1}) on $f(\cdot)=p_{d}(\cdot;\boldsymbol{\theta}_0)w(\cdot)$, then the estimator $\widehat{f}_n (\cdot)=p_{d}(\cdot;\boldsymbol{\theta}_0)\widetilde{w}_n(\cdot)$ in (\ref{MP01}) of $f$ satisfies
	\begin{eqnarray*}\label{Biaisf(x,0)} 
		\mathrm{Bias}\{\widehat{f}_n(\mathbf{x})\}&=&
		p_{d}(\mathbf{x};\boldsymbol{\theta}_0)\left[w(\mathbf{x})-f(\mathbf{x})\{p_{d}(\mathbf{x};\boldsymbol{\theta}_0)\}^{-1}+\left\langle\nabla w(\mathbf{x}),\mathbf{A}\left(\mathbf{x}, \mathbf{H}_n\right)\right\rangle\right] \\
		&&+
		\frac{1}{2}\; p_{d}(\mathbf{x};\boldsymbol{\theta}_0)\left(\operatorname{tr} \left\{{\cal H} w\left(\mathbf{x}\right)\left[\mathbf{B}(\mathbf{x},\mathbf{H}_n)+\mathbf{A}\left(\mathbf{x},\mathbf{H}_n\right)^\mathsf{T}\mathbf{A}\left(\mathbf{x},\mathbf{H}_n\right)\right]\right\}\right)\\
		&&+\left(1+o\left\{\operatorname{tr}\left[\mathbf{B}(\mathbf{x},\mathbf{H}_n)\right]\right\}\right),
	\end{eqnarray*}  
	for any $\mathbf{x}\in\mathbb{T}_d^+$.  Furthermore, if ({\bf a2}) holds then one has $\mathrm{var}\{\widehat{f}_n(\mathbf{x})\}
	=\mathrm{var}\{\widetilde{f}_n(\mathbf{x})\}$ of (\ref{Variance}).
\end{Pro}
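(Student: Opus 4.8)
The plan is to reduce the statement to Proposition~\ref{PropBiasVarf(x)}. Since $\boldsymbol{\theta}_0$ is known, $p_d(\mathbf{x};\boldsymbol{\theta}_0)$ is a fixed positive number at each target $\mathbf{x}$, so $\widehat f_n(\mathbf{x})=p_d(\mathbf{x};\boldsymbol{\theta}_0)\,\widetilde w_n(\mathbf{x})$ and everything comes down to the bias and variance of the associated-kernel estimator $\widetilde w_n$ of the weight function $w$. First I would record that $w=f/p_d(\cdot;\boldsymbol{\theta}_0)$ inherits assumption ({\bf a1}): the parametric start being non-singular, i.e.\ $p_d(\cdot;\boldsymbol{\theta}_0)$ bounded away from $0$ and twice differentiable (or admitting finite differences) on $\mathbb{T}_d^+$, the quotient $w$ is bounded and twice differentiable, with $\nabla w$ and $\mathcal{H}w$ given by the quotient rule.

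For the bias, using that $\mathbf{X}_1,\dots,\mathbf{X}_n$ are iid with pdmf $f=p_d(\cdot;\boldsymbol{\theta}_0)w$,
\[
\mathbb{E}\widetilde w_n(\mathbf{x})=\mathbb{E}\!\left[\frac{\mathbf{K}_{\mathbf{x},\mathbf{H}_n}(\mathbf{X}_1)}{p_d(\mathbf{X}_1;\boldsymbol{\theta}_0)}\right]=\int_{\mathbb{T}_d^+}\mathbf{K}_{\mathbf{x},\mathbf{H}_n}(\mathbf{u})\,w(\mathbf{u})\,\boldsymbol{\nu}(d\mathbf{u})=\mathbb{E}\,w(\mathcal{Z}_{\mathbf{x},\mathbf{H}_n}),
\]
so $\widetilde w_n$ acts on $w$ exactly as the purely nonparametric estimator $\widetilde f_n$ of (\ref{NPE}) acts on $f$. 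Applying the second-order Taylor expansion of $w$ about $\mathbb{E}\mathcal{Z}_{\mathbf{x},\mathbf{H}_n}=\mathbf{x}+\mathbf{A}(\mathbf{x},\mathbf{H}_n)$ used in the proof of Proposition~\ref{PropBiasVarf(x)} — legitimate by ({\bf a1}) on $w$, by Definition~\ref{def_MDAK}, and by ({\bf K1}) for the control of the remainder — gives $\mathbb{E}\widetilde w_n(\mathbf{x})=w(\mathbf{x})+\langle\nabla w(\mathbf{x}),\mathbf{A}(\mathbf{x},\mathbf{H}_n)\rangle+\tfrac12\operatorname{tr}\{\mathcal{H}w(\mathbf{x})[\mathbf{B}(\mathbf{x},\mathbf{H}_n)+\mathbf{A}(\mathbf{x},\mathbf{H}_n)^\top\mathbf{A}(\mathbf{x},\mathbf{H}_n)]\}+o(\operatorname{tr}[\mathbf{B}(\mathbf{x},\mathbf{H}_n)])$. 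Multiplying by $p_d(\mathbf{x};\boldsymbol{\theta}_0)$ and subtracting $f(\mathbf{x})=p_d(\mathbf{x};\boldsymbol{\theta}_0)w(\mathbf{x})$ yields the announced bias, the term $p_d(\mathbf{x};\boldsymbol{\theta}_0)[w(\mathbf{x})-f(\mathbf{x})\{p_d(\mathbf{x};\boldsymbol{\theta}_0)\}^{-1}]$ being identically zero and kept only for transparency.

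For the variance, $\operatorname{var}\{\widehat f_n(\mathbf{x})\}=\{p_d(\mathbf{x};\boldsymbol{\theta}_0)\}^2\operatorname{var}\{\widetilde w_n(\mathbf{x})\}=\dfrac{\{p_d(\mathbf{x};\boldsymbol{\theta}_0)\}^2}{n}\Bigl(\mathbb{E}\bigl[\{\mathbf{K}_{\mathbf{x},\mathbf{H}_n}(\mathbf{X}_1)\}^2\{p_d(\mathbf{X}_1;\boldsymbol{\theta}_0)\}^{-2}\bigr]-\{\mathbb{E}\widetilde w_n(\mathbf{x})\}^2\Bigr)$; rewriting the second-moment term through $f/p_d^2=w/p_d$ as $\int_{\mathbb{T}_d^+}\{\mathbf{K}_{\mathbf{x},\mathbf{H}_n}(\mathbf{u})\}^2\{w(\mathbf{u})/p_d(\mathbf{u};\boldsymbol{\theta}_0)\}\,\boldsymbol{\nu}(d\mathbf{u})$ and invoking the localisation argument of the proof of Proposition~\ref{PropBiasVarf(x)} — the mass of $\mathbf{K}_{\mathbf{x},\mathbf{H}_n}$ concentrating at $\mathbf{x}$ as $\mathbf{H}_n\to\mathbf{0}_d$ while the continuous factor $w/p_d(\cdot;\boldsymbol{\theta}_0)$ is locally nearly constant — replaces it by $\{w(\mathbf{x})/p_d(\mathbf{x};\boldsymbol{\theta}_0)\}\,\|\mathbf{K}_{\mathbf{x},\mathbf{H}_n}\|_2^2\,(1+o(1))$, so that after the factor $\{p_d(\mathbf{x};\boldsymbol{\theta}_0)\}^2/n$ the leading term is $n^{-1}f(\mathbf{x})\|\mathbf{K}_{\mathbf{x},\mathbf{H}_n}\|_2^2$; since $\{\mathbb{E}\widetilde w_n(\mathbf{x})\}^2=O(1)$ only contributes $O(1/n)=o[1/\{n(\det\mathbf{H}_n)^r\}]$ by ({\bf a2}), we obtain $\operatorname{var}\{\widehat f_n(\mathbf{x})\}=\operatorname{var}\{\widetilde f_n(\mathbf{x})\}$ of (\ref{Variance}).

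The step I expect to be the real obstacle is this variance localisation: unlike in Proposition~\ref{PropBiasVarf(x)}, the factor multiplying $\|\mathbf{K}_{\mathbf{x},\mathbf{H}_n}\|_2^2$ is $w/p_d(\cdot;\boldsymbol{\theta}_0)$ and not $f$ itself, so one must combine the non-singularity of the parametric start (to keep $1/p_d(\cdot;\boldsymbol{\theta}_0)$ bounded in a neighbourhood of $\mathbf{x}$) with the ({\bf K2})-type growth of $\|\mathbf{K}_{\mathbf{x},\mathbf{H}_n}\|_2^2$ and the continuity of $w$ in order to extract the value at $\mathbf{x}$ with an $o(1)$ relative error; the bias half is, by contrast, a near-verbatim transcription of the earlier argument with $w$ playing the role of $f$. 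Finally, since $\boldsymbol{\theta}_0$ is not estimated no extra term arises, and Proposition~\ref{PropCst_n} together with Corollary~\ref{Coroll} apply unchanged to $\widehat f_n=p_d(\cdot;\boldsymbol{\theta}_0)\widetilde w_n$, as already noted in the text.
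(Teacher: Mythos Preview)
Your proposal is correct and follows essentially the same route as the paper: the paper also writes $\mathrm{Bias}\{\widehat f_n(\mathbf{x})\}=p_d(\mathbf{x};\boldsymbol{\theta}_0)\mathbb{E}[\widetilde w_n(\mathbf{x})]-f(\mathbf{x})$ and $\mathrm{var}\{\widehat f_n(\mathbf{x})\}=\{p_d(\mathbf{x};\boldsymbol{\theta}_0)\}^2\mathrm{var}[\widetilde w_n(\mathbf{x})]$, identifies $\mathbb{E}[\widetilde w_n(\mathbf{x})]=\mathbb{E}\,w(\mathcal{Z}_{\mathbf{x},\mathbf{H}_n})$, and then replays the Taylor and localisation arguments of Proposition~\ref{PropBiasVarf(x)} with $w$ (or equivalently $f/p_d^2$) in place of $f$. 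The only cosmetic difference is that the paper applies the variance localisation to $f(\cdot)\{p_d(\cdot;\boldsymbol{\theta}_0)\}^{-2}$ directly rather than rewriting it as $w/p_d$, but this is the same function, and your additional remarks on transferring ({\bf a1}) to $w$ and on the role of non-singularity of $p_d$ are entirely consistent with (and slightly more explicit than) the paper's argument.
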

It is expected that the bias here is quite different from that of (\ref{Biais}).

\subsection{Unknown $d$-Variate Parametric Model}

Let us now consider the more realistic and practical semiparametric estimator $\widehat{f}_{n}(\cdot)= p_{d}(\cdot ;\widehat{\boldsymbol{\theta}}_n) \widetilde{w}_{n}(\cdot ;\widehat{\boldsymbol{\theta}}_n)$ presented in (\ref{SME}) of $f(\cdot)= p_{d}(\cdot ;\boldsymbol{\theta}) w(\cdot ;\boldsymbol{\theta})$ in (\ref{semiP}) such that the parametric estimator $\widehat{\boldsymbol{\theta}}_n$ of $\boldsymbol{\theta}$ can be obtained by the maximum likelihood method; see \cite{HjortG95} for quite a general estimator of $\boldsymbol{\theta}$. In fact, if the $d$-variate parametric model $p_{d}(\cdot ;\boldsymbol{\theta})$ is misspecified then this  $\widehat{\boldsymbol{\theta}}_n$ converges in probability to the pseudotrue value $\boldsymbol{\theta}_0$ satisfying 
$$\boldsymbol{\theta}_0:=\arg\min\limits_{\boldsymbol{\theta}}\int_{\mathbf{x} \in \mathbb{T}_d^+} f(\mathbf{x})\log[f(\mathbf{x})/p_{d}(\mathbf{x};\boldsymbol{\theta})]\boldsymbol{\nu}(d\mathbf{x})$$ 
from the Kullback-Leibler divergence (see, e.g., \cite{White82}).

By writting $p_0(\cdot):=p_{d}(\cdot ;\boldsymbol{\theta}_0)$ this best $d$-variate parametric approximant, but this $p_0(\cdot)$ is not explicitly expressible as the one in (\ref{MP01}). According to \cite{HjortG95} (see also \cite{KSKB09}), we can represent the proposed estimator $\widehat{f}_{n}(\cdot)= p_{d}(\cdot ;\widehat{\boldsymbol{\theta}}_n) \widetilde{w}_{n}(\cdot ;\widehat{\boldsymbol{\theta}}_n)$ in (\ref{SME}) as
\begin{equation}\label{SME_KL}
\widehat{f}_{n}(\mathbf{x})\doteq 
\frac{1}{n} \sum_{i=1}^{n}\frac{p_0(\mathbf{x})}{p_0(\mathbf{X}_{i})}\mathbf{K}_{\mathbf{x},\mathbf{H}}(\mathbf{X}_i),\;\;\;\mathbf{x}\in\mathbb{T}_d^+.
\end{equation}
Thus, the following result provides approximate bias and variance. We omit its proof since it is analogous to the one of Proposition \ref{PropBiasVarf(x,0)}.

\begin{Pro}\label{PropBiasVarf(x,theta)}
	Let $p_0(\cdot):=p_{d}(\cdot ;\boldsymbol{\theta}_0)$  be the best $d$-variate approximant of the unknown pdmf $f(\cdot)= p_{d}(\cdot ;\boldsymbol{\theta}) w(\cdot;\boldsymbol{\theta})$ as (\ref{semiP}) under the Kullback–Leibler criterion, and let $w(\cdot) := f(\cdot)/p_0(\cdot)$ be the corresponding $d$-variate weight function. As $n\to\infty$ and under the assumption ({\bf a1}) on $f$, then the estimator $\widehat{f}_n (\cdot)=p_{d}(\cdot;\widehat{\boldsymbol{\theta}}_n)\widetilde{w}_n(\cdot;\widehat{\boldsymbol{\theta}}_n)$ in (\ref{SME}) of $f$ and refomulated in (\ref{SME_KL}) satisfies
	\begin{eqnarray*}\label{Biaisf(x,theta)} 
		\mathrm{Bias}\{\widehat{f}_n(\mathbf{x})\}&=&
		p_0(\mathbf{x})\left[w(\mathbf{x})-f(\mathbf{x})\{p_0(\mathbf{x})\}^{-1}+\left\langle\nabla w(\mathbf{x}),\mathbf{A}\left(\mathbf{x}, \mathbf{H}_n\right)\right\rangle\right] \\
		&&+
		\frac{1}{2}\; p_0(\mathbf{x})\left(\operatorname{tr} \left\{{\cal H} w\left(\mathbf{x}\right)\left[\mathbf{B}(\mathbf{x},\mathbf{H}_n)+\mathbf{A}\left(\mathbf{x},\mathbf{H}_n\right)^\mathsf{T}\mathbf{A}\left(\mathbf{x},\mathbf{H}_n\right)\right]\right\}\right)\\
		&&+\left(1+o\left\{\operatorname{tr}\left[\mathbf{B}(\mathbf{x},\mathbf{H}_n)\right]\right\}+n^{-2}\right),
	\end{eqnarray*}  
	for any $\mathbf{x}\in\mathbb{T}_d^+$.  Furthermore, if ({\bf a2}) holds then we have $\mathrm{var}\{\widehat{f}_n(\mathbf{x})\}
	=\mathrm{var}\{\widetilde{f}_n(\mathbf{x})\}$ of (\ref{Variance}).
\end{Pro}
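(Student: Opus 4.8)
The plan is to mimic the proof of Proposition \ref{PropBiasVarf(x,0)}, replacing the fixed known parameter $\boldsymbol{\theta}_0$ by the pseudotrue Kullback--Leibler minimizer, and then control the stochastic error introduced by substituting $\widehat{\boldsymbol{\theta}}_n$ for $\boldsymbol{\theta}_0$. First I would recall the representation (\ref{SME_KL}): using that $\widehat{\boldsymbol{\theta}}_n \to \boldsymbol{\theta}_0$ in probability at the usual parametric rate (so $\widehat{\boldsymbol{\theta}}_n - \boldsymbol{\theta}_0 = O_p(n^{-1/2})$ by the classical misspecified-MLE theory of White), a first-order Taylor expansion of $p_d(\mathbf{x};\widehat{\boldsymbol{\theta}}_n)/p_d(\mathbf{X}_i;\widehat{\boldsymbol{\theta}}_n)$ around $\boldsymbol{\theta}_0$ shows that the ratio equals $p_0(\mathbf{x})/p_0(\mathbf{X}_i)$ up to a term that is $O_p(n^{-1/2})$ uniformly, hence contributes at order $n^{-1}$ to the bias (and is swallowed by the $n^{-2}$ remainder once squared and by the leading $(1+\cdots)$ term). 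This is exactly why the statement carries the extra $+n^{-2}$ inside the remainder relative to Proposition \ref{PropBiasVarf(x,0)}.

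Next, conditionally on (or treating as fixed) the approximant $p_0(\cdot)$, I would apply the bias computation of Proposition \ref{PropBiasVarf(x,0)} verbatim with $\boldsymbol{\theta}_0$ in the role of the known parameter and $w(\cdot) = f(\cdot)/p_0(\cdot)$ the induced weight function. Concretely, write $\widehat{f}_n(\mathbf{x}) \doteq p_0(\mathbf{x}) \widetilde{w}_n(\mathbf{x})$ with $\widetilde{w}_n(\mathbf{x}) = n^{-1}\sum_i \mathbf{K}_{\mathbf{x},\mathbf{H}}(\mathbf{X}_i)/p_0(\mathbf{X}_i)$; then $\mathbb{E}\widetilde{w}_n(\mathbf{x}) = \mathbb{E}\{\mathbf{K}_{\mathbf{x},\mathbf{H}}(\mathbf{X})/p_0(\mathbf{X})\} = \mathbb{E}_{\mathcal{Z}_{\mathbf{x},\mathbf{H}}}\{ f(\mathcal{Z}_{\mathbf{x},\mathbf{H}})/p_0(\mathcal{Z}_{\mathbf{x},\mathbf{H}}) \} = \mathbb{E}\{ w(\mathcal{Z}_{\mathbf{x},\mathbf{H}})\}$, and a second-order Taylor (or finite-difference) expansion of $w$ around $\mathbb{E}\mathcal{Z}_{\mathbf{x},\mathbf{H}} = \mathbf{x} + \mathbf{A}(\mathbf{x},\mathbf{H}_n)$, using Definition \ref{def_MDAK} and assumption ({\bf a1}) on $w$ (inherited from $f$ since $p_0$ is non-singular), produces $w(\mathbf{x}) + \langle \nabla w(\mathbf{x}), \mathbf{A}(\mathbf{x},\mathbf{H}_n)\rangle + \tfrac12 \operatorname{tr}\{\mathcal{H}w(\mathbf{x})[\mathbf{B}(\mathbf{x},\mathbf{H}_n)+\mathbf{A}^\top\mathbf{A}]\} + o(\operatorname{tr}\mathbf{B})$. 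Multiplying through by $p_0(\mathbf{x})$ and subtracting $f(\mathbf{x}) = p_0(\mathbf{x})w(\mathbf{x})$ gives the stated bias formula (the term $-f(\mathbf{x})\{p_0(\mathbf{x})\}^{-1}$ inside the bracket is precisely the leftover $w(\mathbf{x}) - f(\mathbf{x})/p_0(\mathbf{x})$ bookkeeping, which is identically zero here but kept for parallelism with (\ref{MP01})). For the variance, since $\widehat{f}_n(\mathbf{x}) = p_0(\mathbf{x})\widetilde{w}_n(\mathbf{x})$ up to the $O_p(n^{-1/2})$ correction just discussed, $\operatorname{var}\{\widehat{f}_n(\mathbf{x})\} = p_0(\mathbf{x})^2 n^{-1}\operatorname{var}\{\mathbf{K}_{\mathbf{x},\mathbf{H}}(\mathbf{X})/p_0(\mathbf{X})\} + o(\cdots)$; the leading term is $n^{-1} p_0(\mathbf{x})^2 \mathbb{E}\{\mathbf{K}_{\mathbf{x},\mathbf{H}}(\mathbf{X})^2/p_0(\mathbf{X})^2\} = n^{-1} p_0(\mathbf{x}) \int \mathbf{K}_{\mathbf{x},\mathbf{H}}(\mathbf{u})^2 [p_0(\mathbf{x})/p_0(\mathbf{u})] f(\mathbf{u})\boldsymbol{\nu}(d\mathbf{u})$, and because the squared kernel concentrates at $\mathbf{u}=\mathbf{x}$ the factor $p_0(\mathbf{x})/p_0(\mathbf{u}) \to 1$, leaving $n^{-1}f(\mathbf{x})\|\mathbf{K}_{\mathbf{x},\mathbf{H}_n}\|_2^2 + o[1/(n(\det\mathbf{H}_n)^r)]$ under ({\bf a2}) — identical to (\ref{Variance}).

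The main obstacle, and the only place where care beyond copying Proposition \ref{PropBiasVarf(x,0)} is needed, is making the substitution $\widehat{\boldsymbol{\theta}}_n \rightsquigarrow \boldsymbol{\theta}_0$ rigorous: one must argue that the difference $\widehat{f}_n(\mathbf{x}) - n^{-1}\sum_i [p_0(\mathbf{x})/p_0(\mathbf{X}_i)]\mathbf{K}_{\mathbf{x},\mathbf{H}}(\mathbf{X}_i)$ has bias and variance negligible relative to the stated orders. This requires a uniform (in $i$, and locally uniform in $\mathbf{x}$) bound on the gradient $\nabla_{\boldsymbol{\theta}}[p_d(\mathbf{x};\boldsymbol{\theta})/p_d(\mathbf{X}_i;\boldsymbol{\theta})]$ near $\boldsymbol{\theta}_0$ together with $\|\widehat{\boldsymbol{\theta}}_n - \boldsymbol{\theta}_0\| = O_p(n^{-1/2})$, so that the perturbation is $O_p(n^{-1/2})$ pointwise; its contribution to the bias is then $O(n^{-1/2})$ and, after the expansion is organized as in the display, absorbable — which is why the remainder is written as $1 + o\{\operatorname{tr}\mathbf{B}(\mathbf{x},\mathbf{H}_n)\} + n^{-2}$. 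Since the statement only claims an approximate (``$\doteq$'') identity and the authors explicitly omit the proof as ``analogous'' to that of Proposition \ref{PropBiasVarf(x,0)}, I would present the argument at that level of rigor: reduce to (\ref{SME_KL}) via the Taylor/White argument, then invoke the already-proven expansion for a fixed non-singular parametric start with $\boldsymbol{\theta}_0$ in place of the known $\boldsymbol{\theta}_0$ of (\ref{MP01}), noting that the non-singularity of $p_0$ transfers assumption ({\bf a1}) from $f$ to $w = f/p_0$ and that assumption ({\bf a2}) is untouched, giving the variance $=\operatorname{var}\{\widetilde f_n(\mathbf{x})\}$ of (\ref{Variance}) exactly as claimed.
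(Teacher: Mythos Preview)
Your proposal is correct and follows exactly the route the paper indicates: the authors explicitly omit the proof, stating only that it ``is analogous to the one of Proposition~\ref{PropBiasVarf(x,0)},'' and your reduction via the approximation (\ref{SME_KL}) followed by the verbatim bias/variance expansion of Proposition~\ref{PropBiasVarf(x,0)} with $p_0$ in place of $p_d(\cdot;\boldsymbol{\theta}_0)$ is precisely that analogy. Your additional discussion of controlling the $\widehat{\boldsymbol{\theta}}_n \rightsquigarrow \boldsymbol{\theta}_0$ substitution via White's misspecified-MLE theory goes somewhat beyond what the paper spells out, but is the natural justification for the extra $n^{-2}$ remainder in the bias.
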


Once again, the bias is different from that of (\ref{Biais}). Thus, the proposed semiparametric estimator $\widehat{f}_n$ in (\ref{SME}) of $f$  can be shown to be better or not than the traditional nonparametric one $\widetilde{f}_n$ in (\ref{NPE}). The following subsection provides a practical solution.

\subsection{Model Diagnostics}

The estimated weight function $\widetilde{w}_n(\mathbf{x},\widehat{\boldsymbol{\theta}}_n)$ given in (\ref{w_tilde}) provides useful information for model diagnostics. The $d$-variate  weight function $w(\cdot)$ is equal one if the $d$-variate parametric start model $p_d(\cdot;\boldsymbol{\theta}_0)$ is indeed the true pdmf. \citet{HjortG95} proposed
to check this adequacy by examining a plot of the weight function for various potential models with pointwise confidence bands to see wether or not $w(\mathbf{x})=1$ is reasonable. See also \cite{KSKB09,KZSK17} for univariate count setups.

In fact, without technical details here we use the model diagnostics for verifying the adequacy of the model by examining a plot of $\mathbf{x}\mapsto \widetilde{w}_n(\mathbf{x};\widehat{\boldsymbol{\theta}}_n)$ or 
\begin{equation}\label{Wn}
\widetilde{W}_n(\mathbf{x}):= \log\widetilde{w}_n(\mathbf{x};\widehat{\boldsymbol{\theta}}_n)=\log[\widehat{f}_n(\mathbf{x})/p_{d}(\mathbf{x};\widehat{\boldsymbol{\theta}}_n)]
\end{equation}
for all $\mathbf{x}=\mathbf{X}_i$, $i=1,\ldots,n$,  with a pointwise confidence band of $\pm1.96$ for large $n$;
that is to see how far away it is from zero. More precisely, for instance, $\widetilde{W}_n$ is $<5\%$ for pure nonparametric, it belongs to $[5\%,95\%]$ for semiparametric, and it is $>95\%$ for full parametric models. It is noteworthy that the retention of pure nonparametric means the inconvenience of parametric part considered in this approach; hence, the orthant dataset is left free. 

\section{Semicontinuous Examples of Application with Discussions}\label{5.Applic}

For a practical implementation of our approach, we propose to use the popular multiple gamma kernels as (\ref{KerProd}) by selecting the adaptive Bayesian procedure of \cite{SK20} to smooth $\widetilde{w}_n(\mathbf{x};\widehat{\boldsymbol{\theta}}_n)$. Hence, we shall gradually consider $d$-variate semicontinuous cases with $d=1,2,3$ for real datasets. All computations and graphics have been done with the \textsf{R} software \cite{R20}.  

\subsection{Adaptive Bayesian Bandwidth Selection for Multiple Gamma Kernels}

From Table \ref{Table_1sAK}, the function $G_{x,h}(\cdot)$ is the gamma kernel \cite{Chen00} given on the support $\mathbb{S}_{x,h}= [0,\infty)=\mathbb{T}_1^+$ with  $x\geq 0$ and $h>0$: 
\begin{equation*}\label{gam2}
G_{x,h}(u)=\dfrac{u^{x/h}}{\Gamma\left(1+x/h\right)h^{1+x/h}}\exp{\left(-\dfrac{u}{h}\right)}	\mathds{1}_{[ 0,\infty)}(u), 
\end{equation*} 
where $\mathds{1}_E$ denotes the indicator function of any given event $E$. This gamma kernel $G_{x,h}(\cdot)$ appears to be the pdf of the gamma distribution, denoted by $\mathcal{G}(1 + x/h,h)$ with shape parameter $1 + x/h$ and scale parameter $h$. The multiple gamma kernel from (\ref{KerProd}) is written as $\mathbf{K}_{\mathbf{x},\mathbf{H}}(\cdot)=\prod_{j=1}^{d} G_{x_j,h_{j}}(\cdot)$ with $\mathbf{H}=\mathrm{diag}_d\left(h_j\right)$. 

For applying (\ref{BayesAdap}) and (\ref{NPE_loo}) in the framework of semiparametric estimator $\widehat{f}_n$ in (\ref{SME}), we assume that each component $h_{i \ell}=h_{i \ell}(n)$, $\ell=1,\ldots,d$, of $\mathbf{H}_{i}$ has the univariate inverse gamma prior  $\mathcal{I}g(\alpha,\beta_{\ell})$ distribution with same shape parameters $\alpha > 0$ and, eventually, different scale parameters  $\beta_{\ell} > 0$ such that $\boldsymbol{\beta}=(\beta_1,\ldots,\beta_d)^\top$. We here recall that the pdf of $\mathcal{I}g(\alpha,\beta_\ell)$ with $\alpha,\beta_\ell>0$ is defined by 
\begin{equation}\label{prior}
Ig_{\alpha,\beta_\ell}(u)=\frac{\beta_\ell^{\alpha}}{\Gamma(\alpha)} u^{-\alpha-1} \exp(-\beta_\ell/u)\mathds{1}_{(0, \infty)}(u), \;\;\ell=1,\ldots,d.
\end{equation} 
The mean and the variance of the prior distribution \eqref{prior} for each component $h_{i\ell}$ of the vector $\mathbf{H}_i$ are given by $\beta_\ell/(\alpha -1)$ for $\alpha>1$, and $\beta_\ell^2/(\alpha-1)^2(\alpha-2)$ for $\alpha>2$, respectively.
Note that for a fixed $\beta_\ell>0$, $\ell=1,\ldots,d$, and if $\alpha \rightarrow \infty$, then the distribution of the bandwidth vector $\mathbf{H}_{i}$ is concentrated  around the null vector $\mathbf{0}$.  

From those considerations, the closed form of the posterior density and the Bayesian estimator of $\mathbf{H}_{i}$ are given in the following proposition which is proven in Appendix \ref{AppendixB}.

\begin{Pro}\label{PropBayesAdaG} 
	For fixed $i \in \{1,2,\ldots,n\}$, consider each observation $\mathbf{X}_{i}=(X_{i1},\ldots,X_{id})^{\top}$ with its corresponding $\mathbf{H}_{i}=\mathrm{diag}_d\left(h_{ij}\right)$ of univariate bandwidths and defining the subset  $\mathbb{I}_{i}=\left \{k \in \{1,\ldots,d\}~;X_{ik} = 0\right \}$ and its complementary set $\mathbb{I}^{c}_{i}=\left \{\ell \in \{1,\ldots,d\}~;X_{i\ell} \in (0, \infty)\right \}$. Using the inverse gamma prior $Ig_{\alpha,\beta_{\ell}}$ of (\ref{prior}) for  each component $h_{i\ell}$ of $\mathbf{H}_{i}$ in the multiple gamma estimator with $\alpha>1/2$ and $\boldsymbol{\beta}=(\beta_1,\ldots,\beta_d)^\top\in(0,\infty)^d$, then:
	
	(i) the posterior density is the following weighted sum of inverse gamma 
	\begin{eqnarray*}
	\pi(\mathbf{H}_{i}\mid\mathbf{X}_{i})&=&\frac{p_{d}(\mathbf{X}_i;\widehat{\boldsymbol{\theta}}_n)}{D_{i}(\alpha,\boldsymbol{\beta})}\sum_{j=1,j\neq i}^{n}\frac{1}{p_{d}(\mathbf{X}_{j};\widehat{\boldsymbol{\theta}}_n)}
	\left(\prod_{k \in \mathbb{I}_{i}}C_{jk}(\alpha,\beta_k)\,Ig_{\alpha+1,X_{jk}+\beta_{k}}(h_{ik})\right) \\
	&&\times \left(\prod_{\ell \in \mathbb{I}^{c}_{i}} A_{ij\ell}(\alpha,\beta_\ell)\,Ig_{\alpha+1/2,B_{ij\ell}(\beta_\ell)}(h_{i\ell})\right), 
	\end{eqnarray*}
	with $A_{ij\ell}(\alpha,\beta_\ell)= [\Gamma(\alpha +1/2)]/(\beta_{\ell}^{-\alpha}X_{i\ell}^{1/2}\sqrt{2\pi}[B_{ij\ell}(\beta_\ell)]^{\alpha +1/2})$, $B_{ij\ell}(\beta_\ell)= X_{i\ell}\log(X_{i\ell}/X_{j\ell})+X_{j\ell}-X_{i\ell}+\beta_{\ell}$, $C_{jk}(\alpha,\beta_k)= [ \Gamma ( \alpha +1) ]/[\beta_{k}^{-\alpha}(X_{jk}+\beta_{k})^{\alpha+1}]$, and\\ $D_{i}(\alpha,\boldsymbol{\beta})=p_{d}(\mathbf{X}_i;\widehat{\boldsymbol{\theta}}_n)\sum_{j=1,j\neq i}^{n}\left(\prod_{k \in \mathbb{I}_{i}}A_{ijk}(\alpha,\beta_k)\right)\left(\prod_{\ell \in \mathbb{I}^{c}_{i}} B_{ij\ell}(\beta_\ell)\right)/p_{d}(\mathbf{X}_j;\widehat{\boldsymbol{\theta}}_n)$;
	
	(ii) under the quadratic loss function, the Bayesian estimator $\widehat{\mathbf{\mathbf{H}}}_{i}=\mathrm{diag}_d\left(~\widehat{h}_{im}\right)$ of $\mathbf{H}_{i}$ in (\ref{SME}) is 
	\begin{eqnarray*}
	\widehat{h}_{im} &=& \frac{p_{d}(\mathbf{X}_i;\widehat{\boldsymbol{\theta}}_n)}{D_{i}(\alpha,\boldsymbol{\beta})}\sum_{j=1,j\neq i}^{n}\frac{1}{p_{d}(\mathbf{X}_{j};\widehat{\boldsymbol{\theta}}_n)}
	\left(\prod_{k \in \mathbb{I}_{i}}C_{jk}(\alpha,\beta_k)\right)
	\left(\prod_{\ell \in \mathbb{I}^{c}_{i}} A_{ij\ell}(\alpha,\beta_\ell)\right) \nonumber\\
	&&\times 	\left(\frac{X_{jm}+\beta_{m}}{\alpha}\mathds{1}_{\{0\}}(X_{im}) + \frac{B_{ijm}(\beta_m)}{\alpha-1/2}\mathds{1}_{(0,\infty)}(X_{im})\right),
	\end{eqnarray*}
	for $m=1,2,\ldots,d,$ with the previous notations of $A_{ij\ell}(\alpha,\beta_\ell)$, $B_{ijm}(\beta_m)$, $C_{jk}(\alpha,\beta_k)$ et $D_{i}(\alpha,\boldsymbol{\beta})$.
\end{Pro}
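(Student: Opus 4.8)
The plan is to compute the posterior density directly from the adaptive Bayesian scheme (\ref{BayesAdap})--(\ref{NPE_loo}), specialised to the semiparametric estimator (\ref{SME}) equipped with the multiple gamma kernel. First I would write the posterior as proportional to the product of the joint prior $\pi(\mathbf{H}_i)=\prod_{m=1}^d Ig_{\alpha,\beta_m}(h_{im})$ and the leave-one-out likelihood $\widehat{f}_{n,\mathbf{H}_i,-i}(\mathbf{X}_i)=(n-1)^{-1}\sum_{j\neq i}\{p_d(\mathbf{X}_i;\widehat{\boldsymbol{\theta}}_n)/p_d(\mathbf{X}_j;\widehat{\boldsymbol{\theta}}_n)\}\prod_{m=1}^d G_{X_{im},h_{im}}(X_{jm})$. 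Since $\mathbf{H}_i=\mathrm{diag}_d(h_{im})$ is diagonal the product over $m$ factorises, and I would split each inner product into the factors indexed by $\mathbb{I}_i$ (where $X_{im}=0$) and those indexed by $\mathbb{I}_i^c$ (where $X_{im}>0$), handling the two kinds of coordinates separately.

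For a coordinate $k\in\mathbb{I}_i$ the gamma kernel collapses to $G_{0,h_{ik}}(X_{jk})=h_{ik}^{-1}\exp(-X_{jk}/h_{ik})$; multiplying by the prior factor $Ig_{\alpha,\beta_k}(h_{ik})\propto h_{ik}^{-\alpha-1}\exp(-\beta_k/h_{ik})$ gives, after matching the power of $h_{ik}$ and the coefficient of $1/h_{ik}$, the density $Ig_{\alpha+1,X_{jk}+\beta_k}(h_{ik})$ up to the multiplicative constant $C_{jk}(\alpha,\beta_k)$ -- an exact identity. For a coordinate $\ell\in\mathbb{I}_i^c$ I would work in the asymptotic regime $h_{i\ell}=h_{i\ell}(n)\to 0$, so that $X_{i\ell}/h_{i\ell}\to\infty$ and Stirling's formula $\Gamma(1+t)\sim\sqrt{2\pi t}\,t^{t}e^{-t}$ applies to $\Gamma(1+X_{i\ell}/h_{i\ell})$. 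Substituting it into $G_{X_{i\ell},h_{i\ell}}(X_{j\ell})$ and simplifying the powers of $h_{i\ell}$ and of $X_{j\ell}/X_{i\ell}$ turns the kernel into $(2\pi X_{i\ell}h_{i\ell})^{-1/2}\exp\{-[X_{i\ell}\log(X_{i\ell}/X_{j\ell})+X_{j\ell}-X_{i\ell}]/h_{i\ell}\}$; multiplying by $Ig_{\alpha,\beta_\ell}(h_{i\ell})$ then yields, up to the constant $A_{ij\ell}(\alpha,\beta_\ell)$, the density $Ig_{\alpha+1/2,B_{ij\ell}(\beta_\ell)}(h_{i\ell})$ with $B_{ij\ell}(\beta_\ell)=X_{i\ell}\log(X_{i\ell}/X_{j\ell})+X_{j\ell}-X_{i\ell}+\beta_\ell$ (coordinates with $X_{j\ell}=0$ but $X_{i\ell}>0$ make the kernel vanish, consistently with $B_{ij\ell}\to\infty$). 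Collecting the $n-1$ summands gives the stated mixture form for $\pi(\mathbf{H}_i\mid\mathbf{X}_i)$; the normalising constant $D_i(\alpha,\boldsymbol{\beta})$ is then obtained by integrating the mixture term by term over $\mathcal{M}$, each inverse-gamma density integrating to one, while the overall $\Gamma(\alpha)^{-d}$ prior constant cancels between the mixture and its normaliser.

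Part (ii) follows immediately: under quadratic loss the Bayes estimator of $h_{im}$ is its posterior mean, and since the posterior is a weighted sum of products of inverse-gamma densities, the posterior mean of $h_{im}$ is the same weighted sum of the corresponding marginal inverse-gamma means. Using $\mathbb{E}(U)=b/(a-1)$ for $U\sim Ig_{a,b}$, a summand contributes $(X_{jm}+\beta_m)/\alpha$ when $m\in\mathbb{I}_i$ (mean of $Ig_{\alpha+1,X_{jm}+\beta_m}$) and $B_{ijm}(\beta_m)/(\alpha-1/2)$ when $m\in\mathbb{I}_i^c$ (mean of $Ig_{\alpha+1/2,B_{ijm}(\beta_m)}$), which also explains the hypothesis $\alpha>1/2$ needed for the latter mean to exist.

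The main obstacle is the Stirling step for $\ell\in\mathbb{I}_i^c$: one must check that the approximation is legitimate -- it rests on $h_{i\ell}=h_{i\ell}(n)\to 0$, hence on the large-$n$ regime, so the identities for $\pi(\mathbf{H}_i\mid\mathbf{X}_i)$ and $\widehat{h}_{im}$ are to be read as asymptotic approximations -- and that the factors discarded in the expansion are of smaller order and free of $h_{i\ell}$, so that the inverse-gamma identification is not spoiled. The zero-coordinate computation and the mixture-mean bookkeeping of part (ii) are routine by comparison.
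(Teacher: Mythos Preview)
Your proof is correct and follows the same route as the paper. The paper's own argument is in fact only a pointer: it writes down the semiparametric leave-one-out estimator $\widehat{f}_{n,\mathbf{H}_i,-i}(\mathbf{X}_i)$, notes that the extra factors $p_d(\mathbf{X}_i;\widehat{\boldsymbol{\theta}}_n)/p_d(\mathbf{X}_j;\widehat{\boldsymbol{\theta}}_n)$ do not depend on $\mathbf{H}_i$, and then invokes \cite[Theorem 2.1]{SK20} for the actual computation; your explicit treatment (exact identification on $\mathbb{I}_i$, Stirling approximation on $\mathbb{I}_i^c$, then mixture means for part (ii)) is precisely what that cited theorem contains, so you have made the argument self-contained where the paper does not.
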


Following Som\'e and Kokonendji \cite{SK20} for nonparametric approach, we have to select the prior parameters $\alpha$ and $\boldsymbol{\beta}=(\beta_1,\ldots,\beta_d)^\top$ of the multiple inverse gamma of $\mathcal{I}g(\alpha,\beta_\ell)$ in (\ref{prior}) as follows: $\alpha = \alpha_n = n^{2/5}>2$ and $\beta_\ell>0$, $\ell=1,\ldots,d$, to obtain the convergence of the variable bandwidths to zero with a rate close to that of an optimal bandwidth. For practical use, we here consider each $\beta_\ell=1$.

\subsection{Semicontinuous Datasets}

The numerical illustrations shall be done through the following dataset of Table \ref{Table_WaterPump_data} which are recently used in \cite{SK20} for nonpaprametric approach and only in trivariate setup as semicontinuous data. It concerns three measurements (with $n=42$) of drinking water pumps installed in the Sahel. The first variable $X_1$ represents the \textit{failure times} (in months) and, also, it is  recently used by Tour\'e {\it et al.} \cite{TDGK20}. The second variable $X_2$ refers to the \textit{distance} (in kilometers) between each water pump and the repair center in the Sahel, while the third one $X_3$ stands for \textit{average volume} (in $m^3$) of water per day.
\begin{table}[H]
	\centering
	\caption{Drinking water pumps trivariate data measured in the Sahel with $n=42$.} 
	\begin{tabular}{lrrrrrrrrrrrrrr}
		\toprule
		$X_1:$&23&261&87&10&120&14&62&15&47&225&71&20&246&21\\
		$X_2:$&97&93&94&100&98&84&96&110&121&73&90&93&103&116\\
		$X_3:$&26& 52& 22& 39 &23 &26 &32& 17& 10 &39& 31& 42& 52& 26 \\ \midrule
		$X_1:$&19&42&20&5&12&120&17&11&3&14&71&11&5&14\\
		$X_2:$&114&82&96&94&77&91&117&103&99&113&79&109&84&118\\
		$X_3:$& 26& 36& 43& 36& 6& 27& 15 &36&  9& 52& 11& 20 &25& 37\\  \midrule
		$X_1:$&11&16&90&1&16&52&95&10&1&14&4&7&14&20\\
		$X_2:$&98&93&94&103&109&110&89&108&101&93&102&138&103&96\\
		$X_3:$& 25 &18 &43 &43 &24 &38 & 6 &40&21 &34 &15& 23 &68 &37\\
		\bottomrule
	\end{tabular}
	\label{Table_WaterPump_data}
\end{table}

Table \ref{Table_WaterPump_summ} displays all empirical univariate, bivariate and trivariate variation (\ref{GVI-def}) and dispersion (\ref{GDI-def}) indexes from Table \ref{Table_WaterPump_data}. Hence, each $X_j$, $(X_j,X_k)$ and $(X_1,X_2,X_3)$ is over-dispersed compared to the corresponding uncorrelated Poisson distribution. But, only $(X_1,X_3)$ (resp. $X_1$) can be considered as a bivariate equi-variation (resp. univariate over-variation) with respect to the corresponding uncorrelated exponential distribution; and, other $X_j$, $(X_j,X_k)$ and $(X_1,X_2,X_3)$ are under-varied. In fact, we just compute dispersion indexes for curiosity since all values in Table \ref{Table_WaterPump_data} are positive integers; and, we herenow omit the counting point of view in the remainder of the analysis.
\begin{table}[H]
	\centering
	\caption{Empirical univariate (in diagonal), bivariate (off diagonal) and trivariate (at the corner) variation and dispersion indexes.} 
	\begin{tabular}{rrrr|rrrr}
		\toprule
		$\widehat{\mathrm{GVI}}_3=0.0533$ & $X_1$ & $X_2$ & $X_3$ & $\widehat{\mathrm{GDI}}_3=15.1229$ & $X_1$ & $X_2$ & $X_3$ \\
		\midrule
		$X_1$ & $1.9425$ & $0.0557$ & $1.0549$ & $X_1$ & $89.5860$ & $14.3223$ & $70.7096$ \\	
		$X_2$ & $0.0557$ & $0.0167$ & $0.0157$ & $X_2$ & $14.3223$ & $1.6623$ & $2.0884$ \\	
		$X_3$ & $1.0549$ & $0.0157$ & $0.2122$ & $X_3$ & $70.7096$ & $2.0884$ & $6.3192$ \\		
		\bottomrule
	\end{tabular}
	\label{Table_WaterPump_summ}
\end{table}

Thus, we are gradually investing in semiparametric approaches for three univariates, three bivariates and only one trivariate from $(X_1,X_2,X_3)$ of Table \ref{Table_WaterPump_data}. 

\subsection{Univariate Examples}

For each univariate semicontinuous dataset $X_j$, $j=1,2,3$, we have already computed the GVI in Table \ref{Table_WaterPump_summ} which belongs in $(0.01,1.95)\ni 1$. This allows to consider our flexible  semiparametric estimation $\widehat{f}_{n,j}$ with an exponential $\mathscr{E}_1(\mu_j)$ as start in (\ref{SME}) and using adaptive Bayesian bandwidth in gamma kernel of Proposition \ref{PropBayesAdaG}. Hence, we deduce the corresponding diagnostic percent $\widetilde{W}_{n,j}$ from (\ref{Wn}) for deciding an appropriate approach. In addition, we first present the univariate nonparametric estimation $\widetilde{f}_{n,j}$ with adaptive Bayesian bandwidth in gamma kernel of \cite{Some20} and then propose another parametric estimation of $X_j$ by the standard gamma model with shape ($a_j$) and scale ($b_j$) parameters.

Table \ref{Table_WaterPump_estim1} transcribes parameter maximum likelihood estimates of exponential and gamma models with diagnostic percent from Table \ref{Table_WaterPump_data}. Figure \ref{fig_EstimatDiagnXi} exhibits histogram, $\widetilde{f}_{n,j}$, $\widehat{f}_{n,j}$, exponential, gamma and diagnostic $\widetilde{W}_{n,j}$ graphs for each univariate data $X_j$. One  can observe differences with the naked eye between $\widetilde{f}_{n,j}$ and $\widehat{f}_{n,j}$ although they are very near and with the same pace. The diagnostic $\widetilde{W}_{n,j}$ graphics lead to conclude to semiparametric approach for $X_2$ and to full parametric models for $X_3$ and slightly also for  $X_1$. Thus, we have suggested gamma model with two parameters for improving the starting exponential model; see, e.g., \cite[Table 2]{KTA20} for alternative parametric models.
\begin{table}[H]
	\centering
	\caption{Parameter estimates of models and diagnostic percents of univariate datasets.} 
	\begin{tabular}{crrrrrr}
		\toprule
		{\bf Estimate} & $\widehat{\mu}_j\;\;\;$  & $\widetilde{W}_{n,j}$ ($\%$)  & $\widehat{a}_j\;\;\;\;$ & $\widehat{b}_j\;\;\;\;$ \\
		\midrule
		$X_1$ & $0.0217$ & $95.2381$ & $0.7256$ & $63.5618$\\	
		$X_2$ & $0.0100$ & $76.1905$ & $56.9817$ & $1.7470$ \\	
		$X_3$ & $0.0336$ & $100.0000$ & $3.7512$ & $7.9403$\\	
		\bottomrule
	\end{tabular}
	\label{Table_WaterPump_estim1}
\end{table}

\begin{figure}
	\centering
	\mbox{
		\subfloat[]{	\resizebox*{7.5cm}{!}{\includegraphics{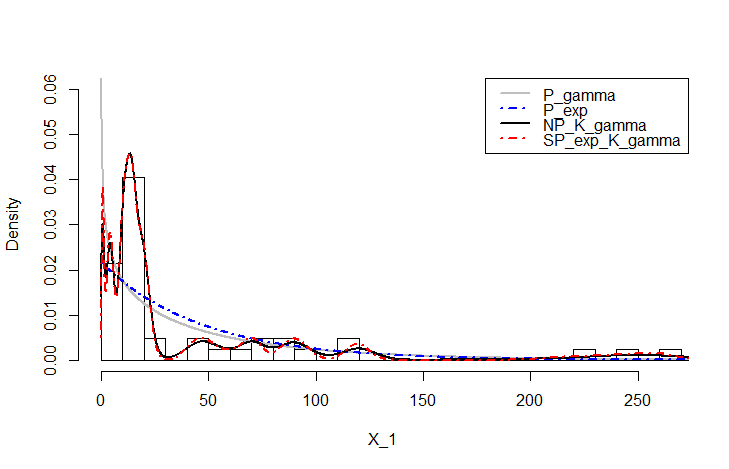}} }\hspace{3pt}
		\subfloat[]{\resizebox*{7.5cm}{!}{\includegraphics{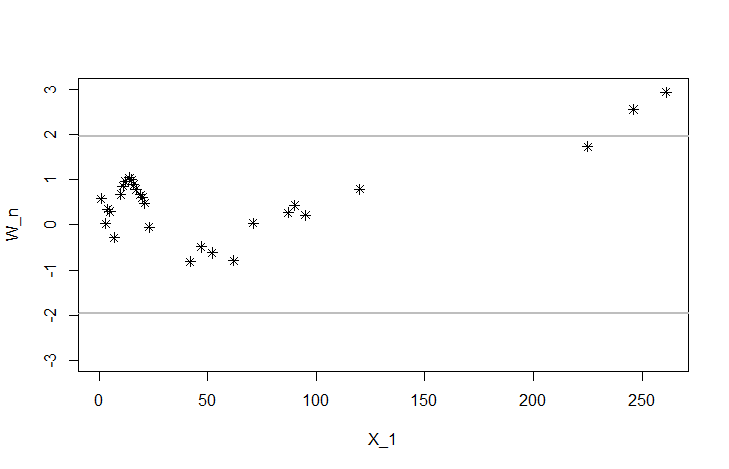}} }
	}
	\mbox{
		\subfloat[]{	\resizebox*{7.5cm}{!}{\includegraphics{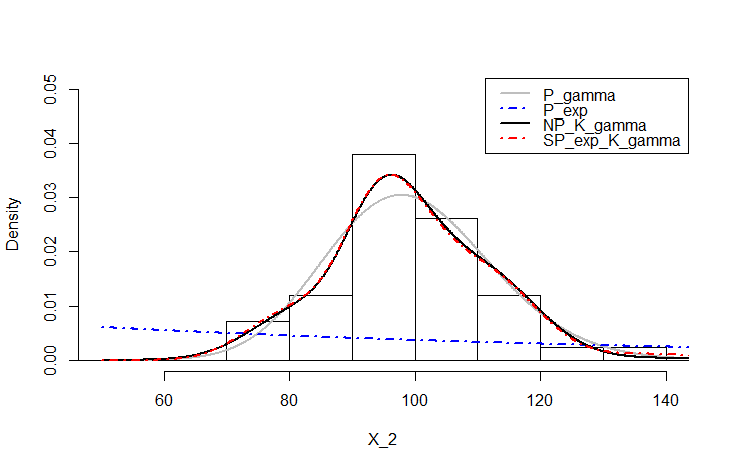}} }\hspace{3pt}
		\subfloat[ ]{\resizebox*{7.5cm}{!}{\includegraphics{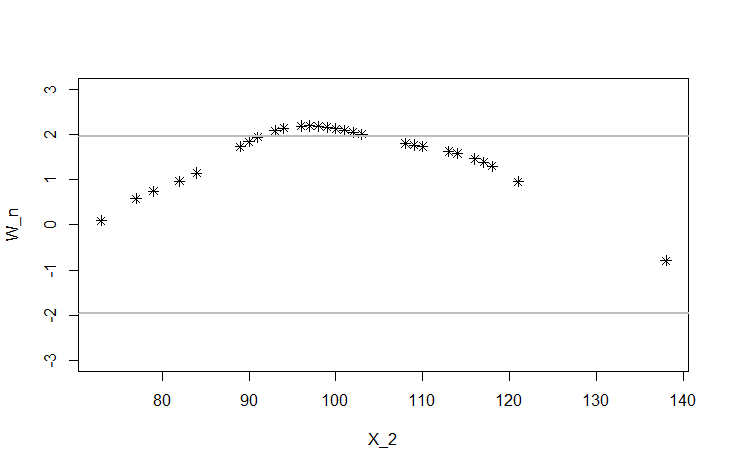}} } 
	}
	\mbox{
	\subfloat[]{	\resizebox*{7.5cm}{!}{\includegraphics{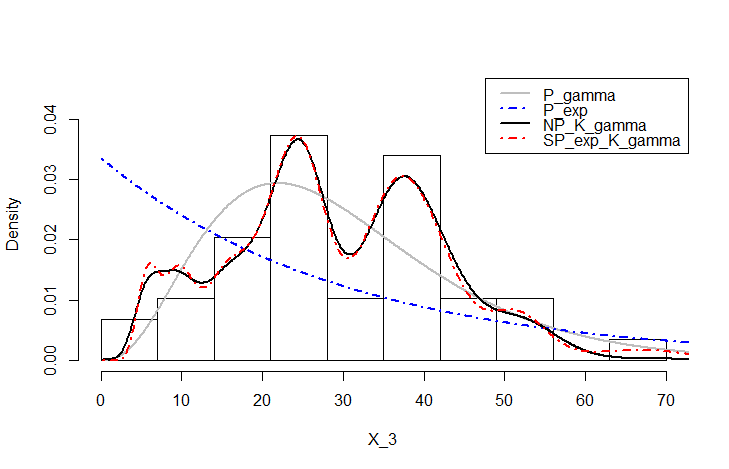}} }\hspace{3pt}
	\subfloat[ ]{\resizebox*{7.5cm}{!}{\includegraphics{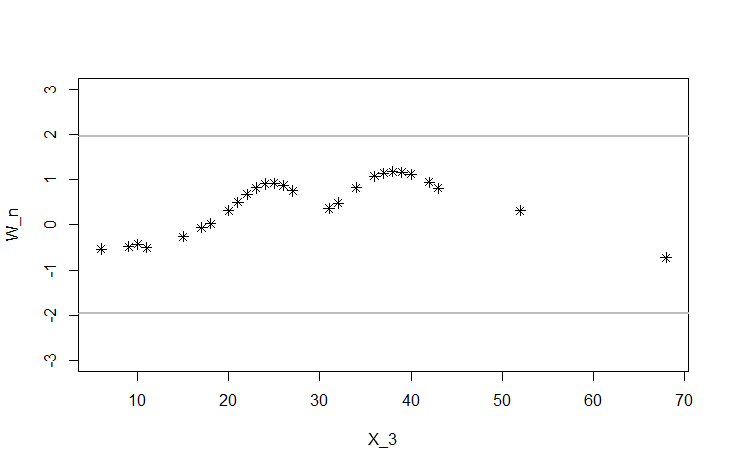}} } 
}
		\caption{Comparative graphs of estimates of $X_1$, $X_2$ and $X_3$ with their corresponding diagnostics.}\label{fig_EstimatDiagnXi}
\end{figure}

\subsection{Bivariate and Trivariate Examples}

For the sake of flexibility and efficiency, we here analyse our proposed semiparametric estimation $\widehat{f}_{n}$ with an uncorrolated exponential as start in (\ref{SME}) and using adaptive Bayesian bandwidth in gamma kernel of Proposition \ref{PropBayesAdaG}. This concerns all bivariate and trivariate datasets from Table \ref{Table_WaterPump_data} for which their GVI are in $(0.01,1.06)\ni 1$ from Table \ref{Table_WaterPump_summ}. All the computation times are alsmost instantaneous.
\begin{table}[H]
	\centering
	\caption{Correlations, MVI, parameter  estimates and diagnostic percents of bi- and trivariate cases.} 
	\begin{tabular}{lrrrrrr}
		\toprule
		\textbf{Dataset} & $(X_1,X_2)$ & $(X_1,X_3)$ & $(X_2,X_3)$ & $(X_1,X_2,X_3)$ \\
		\midrule
		$\widehat{\rho}(X_j,X_k)$ & $-0.3090$ & $0.2597$ & $0.0245$ & $\det\widehat{\boldsymbol{\rho}}=0.8325$\\ 
		$\widehat{\mathrm{MVI}}$ & $0.0720$ & $0.9857$ & $0.0155$ & $0.0634$\\
		$(\widehat{\mu}_j)$ & $(0.0217,0.0100)$ & $(0.0217,0.0336)$ & $(0.0100,0.0336)$ & $(0.0217,0.0100,0.0336)$\\
		$\widetilde{W}_{n}$ ($\%$) & $9.5238$ & $52.3809$ & $26.1005$ & $0.0000$\\	
		\bottomrule
	\end{tabular}
	\label{Table_WaterPump_estim23}
\end{table}

\begin{figure}[htp]
	\centering
	\mbox{
		\subfloat[]{	\resizebox*{7.5cm}{!}{\includegraphics{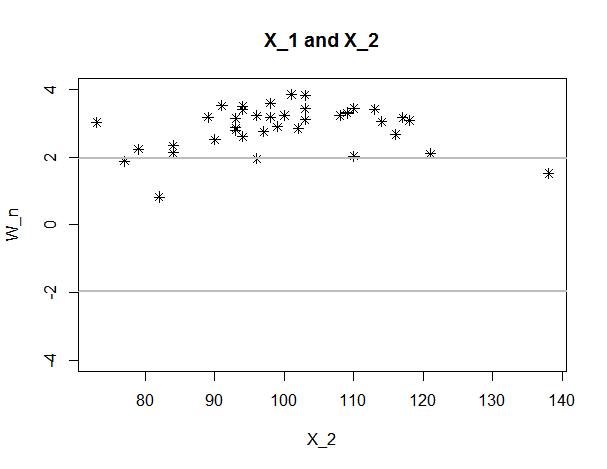}} }\hspace{3pt}
		\subfloat[]{\resizebox*{7.5cm}{!}{\includegraphics{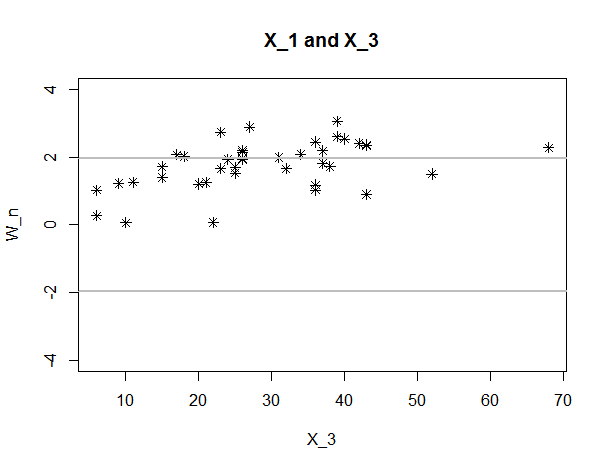}} }
	}
	\mbox{
		\subfloat[]{	\resizebox*{7.5cm}{!}{\includegraphics{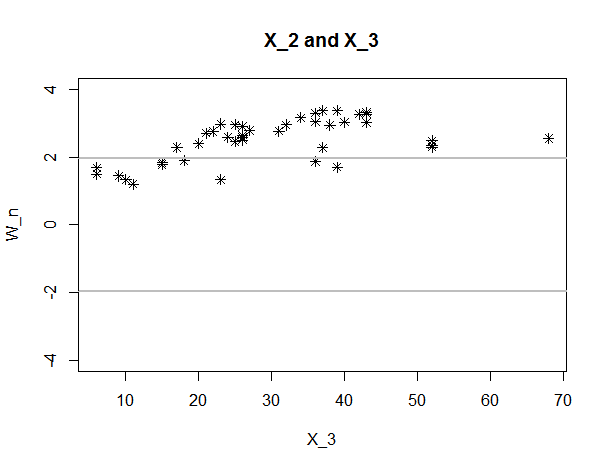}} }\hspace{3pt}
		\subfloat[ ]{\resizebox*{7.5cm}{!}{\includegraphics{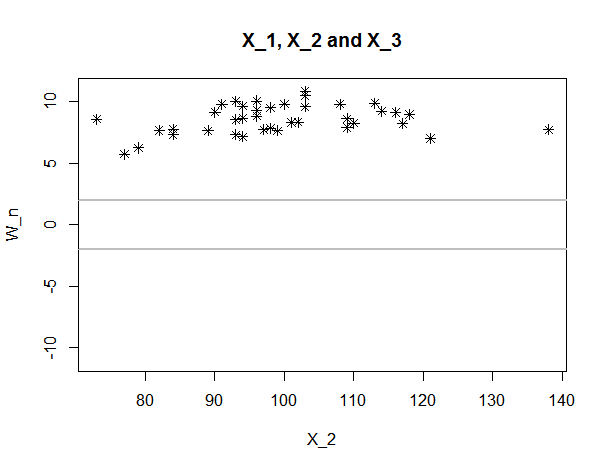}} } 
	}

	\caption{Univariate projections of diagnostic graphs for bivariate and trivariate models.}\label{fig_DiagnXijk}
\end{figure}

Table \ref{Table_WaterPump_estim23} reports the main numerical results of the corresponding correlations, MVI, parameter estimates and finally diagnostic percent $\widetilde{W}_{n}$ from (\ref{Wn}) that we intentionally omit to represent some graphics in three or four dimensions. However, Figure \ref{fig_DiagnXijk} displays some representative projections of $\widetilde{W}_{n}$. From Table \ref{Table_WaterPump_estim23}, the  cross empirical correlations are closed to 0 and all MVI are smaller than $1$ which allow to consider uncorrelated exponential start-models. The maximum likelihood method is also used for estimating the parameters $\mu_j$ for getting the same results as in Table \ref{Table_WaterPump_estim1}. Thus, the obtained numerical values of $\widetilde{W}_{n}$ indicate semiparametric approaches for all bivariate datasets and the purely nonparametric method for the trivariate one; see \citep{SK20} for more details on this nonparametric analysis. This progressive semiparametric analysis of the trivariate  dataset of Table \ref{Table_WaterPump_data} shows the necessity of a suitable choice of the parametric start-models, which may take into account the correlation structure. Hence, the retention of pure nonparametric means the inconvenience of parametric part used in the modelization. Note that we could consider the Marshall-Olkin exponential distributions with nonnegative correlations as start-models; but, they are singular. See Appendix \ref{AppendixA} for a brief review.

\section{Concluding Remarks}\label{6.Concl}

In this paper, we have presented a flexible semiparametric approach for multivariate nonnegative orthant distributions. We have first recalled multivariate variability indexes GVI, MVI, RVI, GDI, MDI and RDI from RWI as a prelude to the second-order discrimination for these parametric  distributions. We have then reviewed and provided new proposals to the nonparametric estimators through multivariate associated kernels; e.g., Proposition \ref{PropCst_n} and Corollary \ref{Coroll}. Both effective adaptive and local Bayesian selectors of bandwidth matrices are suggested for semicontinuous and counting data, respectively. 

All these previous ingredients were finally used to develop the semiparametric modelling for multivariate nonnegative orthant distributions. Numerical illustrations have been simply done for univariate and multivariate semicontinuous datasets with the uncorrolated exponential start-models after examining GVI and MVI. The adaptive Bayesian bandwidth selection (\ref{BayesAdap}) in multiple gamma kernel (Proposition \ref{PropBayesAdaG}) were here required for applications. Finally, the diagnostic models have played a very interesting role for helping to the appropriate approach, even if it means improving it later.

At the meantime, Kokonendji {\it et al.} \cite{Belaid20} proposed an in-depth practical analysis of multivariate count datasets starting by multivariate (un)correlated Poisson models after reviewing GDI and RDI. They have also established an equivalent of our Proposition \ref{PropBayesAdaG} for the local Bayesian bandwidth selection (\ref{BayesLoc}) by using the multiple binomial kernel from Example \ref{ExCountSt}. As one of the many perspectives, one could consider the categorial setup with local Bayesian version of the multivariate associated kernel of Aitchison and Aitken \cite{Aitchison76} from Example \ref{ExCateg} of the univariate case.

At this stage of analysis, all the main  foundations are now available for working in multivariate setup such as variability indexes, associated kernels, Bayesian selectors and model disgnostics. We just have to adapt them to each situation encountered. For instance, we have the semiparametric regression modelling; see, e.g., Abdous {\it et al.} \cite{AKSK12} devoted to count explanatory variables and \cite{SKZK16}. Also, an opportunity will be opened for hazard rate functions (e.g., \cite{Salha14}). The near future of other functional groups, such categorial and mixed, can now be considered with objectivity and feasibility.

\section{Appendix}\label{sec:Appendix}

\subsection{On a Broader $d$-Variate Parametric Models and the Marshall-Olkin Exponential}\label{AppendixA}

According to Cuenin {\it et al.} \cite{Cuenin16}, taking $p\in\{1,2\}$ in their multivariate Tweedie models of flexible dependence structure, another way to define the $d$-variate Poisson and exponential distributions is given by
$\mathscr{P}_d(\boldsymbol{\Lambda})$ and $\mathscr{E}_d(\boldsymbol{\Lambda})$,  respectively. The $d\times d$ symmetric variation matrix
$\boldsymbol{\Lambda}= (\lambda_{ij} )_{i,j \in \{1, \ldots, d\}}$
is such that $\lambda_{ij}=\lambda_{ji}\geq 0$, the mean of the corresponding marginal distribution is $\lambda_{ii}>0$, and the non-negative correlation terms satisfy
\begin{equation}\label{poisson_corr}
\rho_{ij} = \frac{\lambda_{ij}}{\sqrt{\lambda_{ii}\lambda_{jj}}}\in [0,\min\{R(i,j),R(j,i)\}),
\end{equation}
with
$R(i,j) = \sqrt{\lambda_{ii}/\lambda_{jj}} \, (1-\lambda_{ii}^{-1}\sum_{\ell\neq
	i,j}\lambda_{i\ell} )\in (0,1)$. Their  constructions are perfectly defined having $d(d+1)/2$ parameters as in $\mathscr{P}_d(\boldsymbol{\mu},\boldsymbol{\rho})$ and 
$\mathscr{E}_d(\boldsymbol{\mu},\boldsymbol{\rho})$. Moreover, we attain the exact bounds of the  correlation terms in (\ref{poisson_corr}). Cuenin {\it et al.} \cite{Cuenin16} have pointed out the construction and simulation of the negative correlation structure from the positive one of (\ref{poisson_corr}) by considering the inversion method.

The negativity of a correlation component is crucial for the phenomenon of under-variability in a bivariate/multivariate positive orthant  model. Figure~\ref{pict_corr} (right) plots a limit shape of any bivariate positive orthant distribution with very strong negative correlation (in red), which is not the diagonal line of the upper bound ($+1$) of positive correlation (in blue); see, e.g., \cite{Cuenin16} for details on both  bivariate orthant (i.e., continuous and count)  models. Conversely, Figure~\ref{pict_corr} (left) represents the classic lower ($-1$) and upper ($+1$) bounds of correlations on $\mathbb{R}^2$ or finite support.
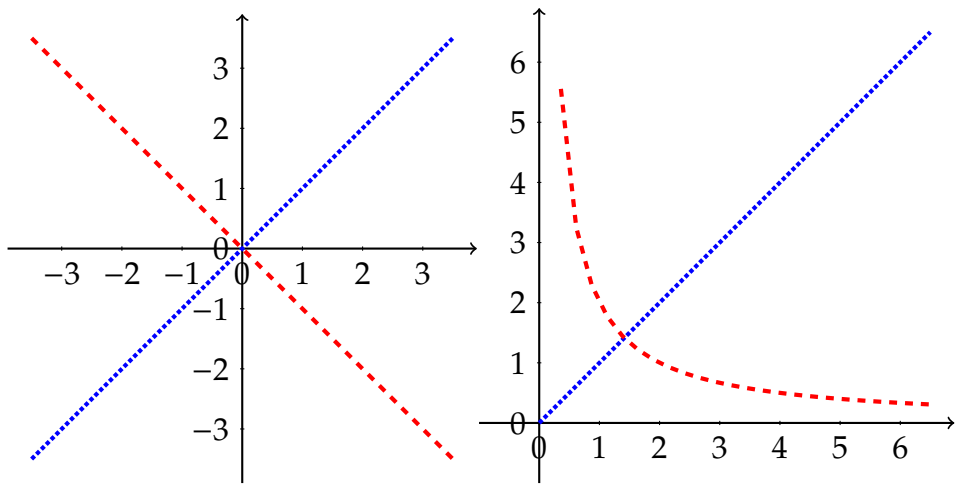
\begin{figure}[!htbp]
	\begin{center}
		\begin{tikzpicture}[scale=.8]
		\draw[thick,->](-3.9,0)--(3.9,0);
		\draw[thick,->](0,-3.9)--(0,3.9);
		\foreach \x in {-3,-2,-1,0,1,2,3}
		\draw(\x,1pt)--(\x,-1pt)node[below]{$\x$};
		\foreach \y in {-3,-2,-1,0,1,2,3}
		\draw(1pt,\y)--(-1pt,\y)node[left]{$\y$};
		\draw[blue,ultra thick,densely dotted][domain=-3.5:3.5] plot(\x,\x);
		\draw[red,ultra thick,dashed][domain=-3.5:3.5] plot(\x,-\x);
		\end{tikzpicture}
		\begin{tikzpicture}[scale=.8]
		\draw[thick,->](-1,0)--(6.9,0);
		\draw[thick,->](0,-1)--(0,6.9);
		\foreach \x in {0,1,2,3,4,5,6}
		\draw(\x,1pt)--(\x,-1pt)node[below]{$\x$};
		\foreach \y in {0,1,2,3,4,5,6}
		\draw(1pt,\y)--(-1pt,\y)node[left]{$\y$};
		\draw[blue,ultra thick,densely dotted][domain=0:6.5] plot(\x,\x);
		\draw[red,ultra thick,dashed][domain=0.36:6.5] plot(\x,2/\x);
		\end{tikzpicture}
		\caption{Support of bivariate distributions with maximum correlations (positive in blue and negative in red): model on $\mathbb{R}^2$ (left) and also finite support; model on $\mathbb{T}_2^+\subseteq [0,\infty)^2$ (right), without finite support.}
		\label{pict_corr}
	\end{center}
\end{figure}

The $d$-variate exponential $\boldsymbol{X}=(X_1,\ldots,X_d)^\top\sim\mathscr{E}_d(\boldsymbol{\mu}, \mu_0)$ of Marshall and Olkin \cite{Marshall67} is built as follows. Let $Y_1,\ldots, Y_d$ and $Z$ be univariate exponential random variables with parameters $\mu_1>0,\ldots, \mu_d>0$ and $\mu_0\geq 0$, respectively. Then, by setting $X_j:=Y_j+Z$ for $j=1,\ldots,d$, one has $\mathbb{E}X_j=1/(\mu_j+\mu_0)=\sqrt{\mathrm{var}X_j}$ and $\mathrm{cov}(X_j,X_\ell)=\mu_0/\{(\mu_j+\mu_0)(\mu_\ell+\mu_0)(\mu_j+\mu_\ell+\mu_0)\}$ for all $j\neq\ell$. Each correlation $\rho(X_j,X_\ell)=\mu_0/(\mu_j+\mu_\ell+\mu_0)$ lies in $[0,1]$ if and only if $\mu_0\geq 0$. From its survival (or reliability) function 
$$
S(\mathbf{x};\boldsymbol{\mu}, \mu_0)=\exp\left(-\mu_0\max (x_1,\ldots,x_d)-\sum_{j=1}^d\mu_j x_j\right),
$$
its pdf can be written as
\begin{equation*}\label{DensityMOexp}
p_d(\mathbf{x};\boldsymbol{\mu},\mu_0)=\left\{
\begin{array}{ll}
S(\mathbf{x};\boldsymbol{\mu}, \mu_0)(\mu_0+\mu_\ell)\prod\limits_{j=1,j\neq\ell}^d\mu_j & \mathrm{if}\;x_\ell:=\max (x_1,\ldots,x_d)\;\mathrm{and}\;x_\ell\neq x_j,\;j\neq\ell \\
S(\mathbf{x};\boldsymbol{\mu}, \mu_0)\mu_0\mu_{j_1}\cdots\mu_{j_k} & \mathrm{if}\;x_{j_1},\ldots,x_{j_k}<x_{\ell_{k+1}}=\cdots=x_{\ell_d} \\
S(\mathbf{x};\boldsymbol{\mu}, \mu_0)\mu_0 & \mathrm{if}\;x_1=\cdots=x_d>0.
\end{array}
\right.
\end{equation*}
It is not absolutely continuous with respect to the Lebesgue measure in $\mathbb{T}_d^+$ and has singularities corresponding to the cases where two or more of the $x_j$'s are equal. Karlis \cite{Karlis03} has proposed a maximum likelihood estimation of parameters via an EM algorithm. Finally, Kokonendji {\it et al.} \cite{KTS20} have calculated  
$$
\mathrm{GVI}(\boldsymbol{X})=1+\frac{\mu_0\sum_{j=1}^d(\mu_j+\mu_0)^{-1}\{\sum_{\ell\neq j}(\mu_j+\mu_\ell+\mu_0)^{-1}(\mu_\ell+\mu_0)^{-1}\}}{\{(\mu_1+\mu_0)^{-2}+\cdots +(\mu_d+\mu_0)^{-2}\}^2} \geq 1\;\;(\Leftrightarrow\mu_0\geq 0).
$$
and
$$
\mathrm{MVI}(\boldsymbol{X})=
\frac{\sum_{j=1}^d(\mu_j+\mu_0)^{-4}}{\sum_{j=1}^d(\mu_j+\mu_0)^{-4}+2\sum_{1\leq j<\ell\leq 1}(\mu_j+\mu_0)^{-2}(\mu_\ell+\mu_0)^{-2}}<1.
$$
Hence, the Marshall-Olkin exponential model $\boldsymbol{X}\sim\mathscr{E}_d(\boldsymbol{\mu}, \mu_0)$ is always under-varied with respect to the MVI and over- or equi-varied with respect to GVI. If $\mu_0=0$ then $\mathscr{E}_d(\boldsymbol{\mu}, \mu_0)$ is reduced to the uncorrolated $\mathscr{E}_d(\boldsymbol{\mu})$ with $
\mathrm{GVI}(\boldsymbol{Y})=1$. However, the assumption of non-negative correlations between components is sometimes unrealistic for some analyzes.

\subsection{Proofs of Proposition \ref{PropBiasVarf(x)}, Proposition \ref{PropBiasVarf(x,0)} and Proposition \ref{PropBayesAdaG}}\label{AppendixB}

\begin{proof}[Proof of Proposition \ref{PropBiasVarf(x)}]
	From Definition \ref{def_MDAK}, we get 
	(see also \cite{KS18} for more details)
	\begin{eqnarray}\label{Efntilde}
	\mathbb{E}\left[\widetilde f_n(\mathbf{x})\right]-f(\mathbf{x})&=&\mathbb{E}\left[\mathbf{K}_{\mathbf{x}, \mathbf{H}_n}(\mathbf{X}_{j})\right]-f(\mathbf{x})=\int_{\mathbb{S}_{\mathbf{x},\mathbf{H}_n}\cap \mathbb{T}_d^+}\mathbf{K}_{\mathbf{x},\mathbf{H}_n}(\mathbf{u})f(\mathbf{u})\boldsymbol{\nu}(d\mathbf{u})-f(\mathbf{x})\nonumber\\
	&=&\mathbb{E}\left[f\left({\cal Z}_{\mathbf{x},\mathbf{H}_n}\right)\right]-f(\mathbf{x}).
	\end{eqnarray}
	Next,  using (\ref{Efntilde}),  by a Taylor expansion of the function $f(\cdot)$  over   the points ${\cal Z}_{\mathbf{x},\mathbf{H}_n}$ and $\mathbb{E} \left[{\cal Z}_{\mathbf{x},\mathbf{H}_n}\right]$, we get 
	\begin{eqnarray}\label{Taylor2}
	f\left({\cal Z}_{\mathbf{x},\mathbf{H}_n}\right)&=&f\left(\mathbb{E}\left[{\cal Z}_{\mathbf{x},\mathbf{H}_n}\right]\right)+ \left\langle \nabla f\left(\mathbb{E}\left[{\cal Z}_{\mathbf{x},\mathbf{H}_n}\right]\right),\left({\cal Z}_{\mathbf{x},\mathbf{H}_n}-\mathbb{E}\left[{\cal Z}_{\mathbf{x},\mathbf{H}_n}\right]\right)\right\rangle\nonumber\\
	&&+\frac{1}{2} \left\langle {\cal H} f\left(\mathbb{E}\left[{\cal Z}_{\mathbf{x},\mathbf{H}_n}\right]\right)\left({\cal Z}_{\mathbf{x},\mathbf{H}_n}-\mathbb{E}\left[{\cal Z}_{\mathbf{x},\mathbf{H}_n}\right]\right),\left({\cal Z}_{\mathbf{x},\mathbf{H}_n}-\mathbb{E}\left[{\cal Z}_{\mathbf{x},\mathbf{H}_n}\right]\right)\right\rangle\nonumber\\
	&& +\left\Vert {\cal Z}_{\mathbf{x},\mathbf{H}_n}-\mathbb{E}\left[{\cal Z}_{\mathbf{x},\mathbf{H}_n}\right]\right\Vert^2o(1)\nonumber\\
	&=&f\left(\mathbb{E}\left[{\cal Z}_{\mathbf{x},\mathbf{H}_n}\right]\right)+ \left\langle \nabla f\left(\mathbb{E}\left[{\cal Z}_{\mathbf{x},\mathbf{H}_n}\right]\right),\left({\cal Z}_{\mathbf{x},\mathbf{H}_n}-\mathbb{E}\left[{\cal Z}_{\mathbf{x},\mathbf{H}_n}\right]\right)\right\rangle\nonumber\\
	&&+\frac{1}{2} \operatorname{tr} \left[{\cal H} f\left(\mathbb{E}\left[{\cal Z}_{\mathbf{x},\mathbf{H}_n}\right]\right)\left({\cal Z}_{\mathbf{x},\mathbf{H}_n}-\mathbb{E}\left[{\cal Z}_{\mathbf{x},\mathbf{H}_n}\right]\right)\left({\cal Z}_{\mathbf{x},\mathbf{H}_n}-\mathbb{E}\left[{\cal Z}_{\mathbf{x},\mathbf{H}_n}\right]\right)^\mathsf{T}\right]\nonumber\\
	&& + \operatorname{tr} \left[\left({\cal Z}_{\mathbf{x},\mathbf{H}_n}-\mathbb{E}\left[{\cal Z}_{\mathbf{x},\mathbf{H}_n}\right]\right)\left({\cal Z}_{\mathbf{x},\mathbf{H}_n}-\mathbb{E}\left[{\cal Z}_{\mathbf{x},\mathbf{H}_n}\right]\right)^\mathsf{T}\right]o(1),
	\end{eqnarray}
	where $o(1)$ is uniform in a neighborhood of $\mathbf{x}$.  
	Therefore, taking the expectation in both sides of \eqref{Taylor2} and then substituting the result in   \eqref{Efntilde},
	we get 
	\begin{eqnarray*} 
		\mathbb{E}\left[\widetilde f_n(\mathbf{x})\right]-f(\mathbf{x})&=&
		f\left(\mathbb{E}\left[{\cal Z}_{\mathbf{x},\mathbf{H}_n}\right]\right)-f(\mathbf{x})+\frac{1}{2} \operatorname{tr} \left[{\cal H} f\left(\mathbb{E}\left[{\cal Z}_{\mathbf{x},\mathbf{H}_n}\right]\right)\operatorname{var}\left({\cal Z}_{\mathbf{x},\mathbf{H}_n}\right)\right]\nonumber\\
		&&+o\left\{\operatorname{tr} \left[\operatorname{var}\left({\cal Z}_{\mathbf{x},\mathbf{H}_n}\right)\right]\right\}\nonumber\\
		&=&f\left(\mathbf{x}+\mathbf{A}\right)-f(\mathbf{x})+ \frac{1}{2} \operatorname{tr} \left[{\cal H} f\left(\mathbf{x}+\mathbf{A}\right)\mathbf{B}(\mathbf{x},\mathbf{H}_n)\right]+o\left\{\operatorname{tr}\left[\mathbf{B}(\mathbf{x},\mathbf{H}_n)\right]\right\},
	\end{eqnarray*}  
	where $o\left\{\operatorname{tr}\left[\mathbf{B}(\mathbf{x},\mathbf{H}_n)\right]\right\}$ is uniform  in a neighborhood of $\mathbf{x}$. The second Taylor expansion of the function $f(\cdot)$ over the points 
	$\mathbf{x}$ and $\mathbf{x}+\mathbf{A}\left(\mathbf{x},\mathbf{H}_n\right)$ allows to conclude the bias (\ref{Biais}).
	
	About the variance term, $f$ being  bounded, we have $\mathbb{E}\left[\mathbf{K}_{\mathbf{x}, \mathbf{H}_n}(\mathbf{X}_{j})\right]=O(1)$. It follows that:
	\begin{eqnarray*}
		\mathrm{var} \left[\widetilde f_n(\mathbf{x})\right]
		&=& \frac{1}{n}\mathrm{var}\left[\mathbf{K}_{\mathbf{x}, \mathbf{H}_n}(\mathbf{X}_{j})\right]\\
		&=&\frac{1}{n}\left[\int_{\mathbb{S}_{\mathbf{x},\mathbf{H}_n}\cap \mathbb{T}_d^+}\mathbf{K}^2_{\mathbf{x},\mathbf{H}_n}(\mathbf{u})f(\mathbf{u})\boldsymbol{\nu}(d\mathbf{u})+O(1)\right]\nonumber\\
		&=&\frac{1}{n} \int_{\mathbb{S}_{\mathbf{x},\mathbf{H}_n}\cap \mathbb{T}_d^+}\mathbf{K}^2_{\mathbf{x},\mathbf{H}_n}(\mathbf{u})\begin{pmatrix}f(\mathbf{x})+\left\langle\nabla f(\mathbf{x}),\mathbf{x}-\mathbf{u}\right\rangle\\+\frac{1}{2}(\mathbf{x}-\mathbf{u})^T{\cal H}f(\mathbf{x}) (\mathbf{x}-\mathbf{u})\\
			+o\left[\left(||\mathbf{x}-\mathbf{u}||^2\right)\right]
		\end{pmatrix}
		\boldsymbol{\nu}(d\mathbf{u})\\
		&=&\frac{1}{n}f(\mathbf{x})||\mathbf{K}_{\mathbf{x},\mathbf{H}_n}||_2^2+o\left[\frac{1}{n(\det\mathbf{H}_n)^r}\right].
	\end{eqnarray*}
\end{proof}

\begin{proof}[Proof of Proposition  \ref{PropBiasVarf(x,0)}]
	Since one has $\mathrm{Bias}[\widehat{f}_n(\mathbf{x})]=p_{d}(\mathbf{x};\boldsymbol{\theta}_0)\mathbb{E}[\widetilde{w}_n(\mathbf{x})]-f(\mathbf{x})$ and $\mathrm{var}[\widehat{f}_n(\mathbf{x})]=[p_{d}(\mathbf{x};\boldsymbol{\theta}_0)]^2 \mathrm{var} [\widetilde{w}_n(\mathbf{x})]$, it is enough to calculate $\mathbb{E}[\widetilde{w}_n(\mathbf{x})]$ and $\mathrm{var}[\widetilde{w}_n(\mathbf{x})]$ following Proposition \ref{PropBiasVarf(x)} applied to $\widetilde{w}_n(\mathbf{x})=n^{-1}\sum_{i=1}^n \mathbf{K}_{\mathbf{x},\mathbf{H}}(\mathbf{X}_i)/p_{d}(\mathbf{X}_i;\boldsymbol{\theta}_0)$ for all $\mathbf{x}\in\mathbb{T}_d^+$.
	
	Indeed, one successively has
	\begin{eqnarray*}
		\mathbb{E}[\widetilde{w}_n(\mathbf{x})] 
		&=&\mathbb{E}\left[\mathbf{K}_{\mathbf{x}, \mathbf{H}_n}(\mathbf{X}_{1})/p_{d}(\mathbf{X}_1;\boldsymbol{\theta}_0)\right]\\
		&=&\int_{\mathbb{S}_{\mathbf{x},\mathbf{H}_n}\cap \mathbb{T}_d^+}\mathbf{K}_{\mathbf{x},\mathbf{H}_n}(\mathbf{u})[p_{d}(\mathbf{u};\boldsymbol{\theta}_0)]^{-1}f(\mathbf{u})\boldsymbol{\nu}(d\mathbf{u})
		=\mathbb{E}\left[w\left({\cal Z}_{\mathbf{x},\mathbf{H}_n}\right)\right]\\
		&=&w(\mathbf{x})+\left\langle\nabla w(\mathbf{x}),\mathbf{A}\left(\mathbf{x}, \mathbf{H}_n\right)\right\rangle
		+
		\frac{1}{2}\left(\operatorname{tr} \left\{{\cal H} w\left(\mathbf{x}\right)\left[\mathbf{B}(\mathbf{x},\mathbf{H}_n)+\mathbf{A}\left(\mathbf{x},\mathbf{H}_n\right)^\mathsf{T}\mathbf{A}\left(\mathbf{x},\mathbf{H}_n\right)\right]\right\}\right)\\
		&&+o\left\{\operatorname{tr}\left[\mathbf{B}(\mathbf{x},\mathbf{H}_n)\right]\right\},
	\end{eqnarray*}
	which leads to the announced result of $\mathrm{Bias}[\widehat{f}_n(\mathbf{x})]$. As for $\mathrm{var}[\widetilde{w}_n(\mathbf{x})]$, one also write
	\begin{eqnarray*}
		\mathrm{var} \left[\widetilde w_n(\mathbf{x})\right]
		&=& \frac{1}{n}\mathrm{var}\left[\mathbf{K}_{\mathbf{x},\mathbf{H}_n}(\mathbf{X}_{1})/p_{d}(\mathbf{X}_1;\boldsymbol{\theta}_0)\right]\\
		&=&\frac{1}{n}\left[\int_{\mathbb{S}_{\mathbf{x},\mathbf{H}_n}\cap \mathbb{T}_d^+}\mathbf{K}^2_{\mathbf{x},\mathbf{H}_n}(\mathbf{u})[p_{d}(\mathbf{u};\boldsymbol{\theta}_0)]^{-2}f(\mathbf{u})\boldsymbol{\nu}(d\mathbf{u})+O(1)\right]\\
		&=&\frac{1}{n}f(\mathbf{x})[p_{d}(\mathbf{x};\boldsymbol{\theta}_0)]^{-2}||\mathbf{K}_{\mathbf{x},\mathbf{H}_n}||_2^2+o\left[\frac{1}{n(\det\mathbf{H}_n)^r}\right]
	\end{eqnarray*}
	and the desired result of $\mathrm{var}[\widehat{f}_n(\mathbf{x})]$ is therefore deduced.
\end{proof}

\begin{proof}[Proof of Proposition  \ref{PropBayesAdaG}]
	We have to adapt Theorem 2.1 of Som\'e and Kokonendji \cite{SK20} to this semiparametric estimator $\widehat{f}_n$ in (\ref{SME}). First, the leave-one-out associated kernel estimator (\ref{NPE_loo}) becomes
	\begin{equation*}\label{NPE_looS}
	\widehat{f}_{n,\mathbf{H}_i,-i}(\mathbf{X}_i):=
	\frac{p_{d}(\mathbf{X}_i;\widehat{\boldsymbol{\theta}}_n)}{n-1}\sum_{\ell=1,\ell\neq i}^{n}
	\frac{1}{p_{d}(\mathbf{X}_{\ell};\widehat{\boldsymbol{\theta}}_n)} \mathbf{K}_{\mathbf{X}_i,\mathbf{H}_i}(\mathbf{X}_\ell).
	\end{equation*}
	Then, the posterior distribution deduced from (\ref{BayesAdap}) is exppressed as
	\begin{equation*}\label{BayesAdapS}
	\pi(\mathbf{H}_i\mid\mathbf{X}_i):=\pi(\mathbf{H}_i)\widehat{f}_{n,\mathbf{H}_i,-i}(\mathbf{X}_i)\left[\int_{\mathcal{M}}\widehat{f}_{n,\mathbf{H}_i,-i}(\mathbf{X}_i)\pi(\mathbf{H}_i)d\mathbf{H}_i\right]^{-1}
	\end{equation*}
	and which leads to the result of Part (i) via \cite[Theorem 2.1 (i)]{SK20} for details. Consequently, we similarly deduce the adaptive Bayesian estimator $\widehat{\mathbf{\mathbf{H}}}_{i}=\mathrm{diag}_d\left(~\widehat{h}_{im}\right)$ of Part (ii).
\end{proof}

\section*{Acknowledgements}
We sincerely thank Mohamed Elmi Assoweh for some interesting discussions.


\section*{Abbreviations}
The following abbreviations are used in this manuscript:\\

\noindent 
\begin{tabular}{@{}ll}
	GDI & Generalized dispersion index\\
	GVI & Generalized variation index\\
	iid & Independent and identically distributed\\
	MDI & Marginal dispersion index\\
	MVI & Marginal variation index\\
	pdf & Probability density function\\
	pdmf & Probability density or mass function\\
	pmf & Probability mass function\\
	RDI & Relative dispersion index\\
	RVI & Relative variation index\\
	RWI & Relative variability index
\end{tabular}

\section*{References}



\newpage
\tableofcontents
\end{document}